\documentclass[11pt]{article}
\usepackage[utf8]{inputenc}
\usepackage[T1]{fontenc}

\usepackage{amsmath, amssymb, amsthm, amsfonts}
\usepackage{mathtools}
\usepackage{thm-restate}

\usepackage[dvipsnames,svgnames,x11names]{xcolor}
\definecolor{ForestGreen}{rgb}{0.1333,0.5451,0.1333}
\usepackage[colorlinks=true,linkcolor=ForestGreen,citecolor=blue]{hyperref}

\usepackage[margin=1in]{geometry}
\usepackage{graphics}
\usepackage{pifont}
\usepackage{tikz}
\usepackage{bbm}
\usepackage[T1]{fontenc}

\usetikzlibrary{arrows.meta}
\usepackage{environ}
\usepackage{framed}
\usepackage{url}
\usepackage{algorithm}
\usepackage[algo2e,linesnumbered,lined,ruled,boxed]{algorithm2e}
\usepackage{algpseudocode}
\usepackage[noabbrev,capitalize,nameinlink]{cleveref}
\crefname{equation}{}{}
\usepackage[labelfont=bf]{caption}
\usepackage{cite}
\usepackage{framed}
\usepackage[framemethod=tikz]{mdframed}
\usepackage{appendix}
\usepackage{graphicx}
\usepackage[textsize=tiny]{todonotes}
\usepackage{tcolorbox}
\allowdisplaybreaks[1]

\usepackage{fancybox}

\newenvironment{fminipage}%
  {\begin{Sbox}\begin{minipage}}%
  {\end{minipage}\end{Sbox}\fbox{\TheSbox}}

\newenvironment{algbox}[0]{\vskip 0.2in
\noindent 
\begin{fminipage}{6.3in}
}{
\end{fminipage}
\vskip 0.2in
}

\newcommand\remove[1]{}

\newtheorem{lemma}{Lemma}[section]
\newtheorem{theorem}{Theorem}
\newtheorem*{lemma*}{Lemma}

\newtheorem*{corollary*}{Corollary}

\theoremstyle{definition}

\newtheorem*{theorem*}{Theorem}
\newtheorem{definition}[lemma]{Definition}

\newtheorem*{rem*}{Remark}

\newtheorem{fact}[lemma]{Fact}

\newcommand{\R}{\mathbb{R}}

\newcommand{\E}{\mathop{\mathbb{E}}}
\renewcommand{\O}{\widetilde{O}}
\newcommand{\one}{\mathbbm{1}}

\crefname{algocf}{Algorithm}{Algorithms}

\newcommand{\poly}{\mathrm{poly}}

\renewcommand{\bar}{\overline}
\renewcommand{\hat}{\widehat}

\renewcommand{\bar}{\overline}

\newcommand{\Tr}{\mathsf{Tr}}

\newcommand{\F}{\mathbb{F}}

\renewcommand{\ge}{\geqslant}
\renewcommand{\le}{\leqslant}
\renewcommand{\geq}{\geqslant}
\renewcommand{\leq}{\leqslant}

\newcommand\Otil{\widetilde{O}}

\renewcommand\AA{\boldsymbol{\mathit{A}}}
\newcommand\BB{\boldsymbol{\mathit{B}}}
\newcommand\CC{\boldsymbol{\mathit{C}}}
\newcommand\DD{\boldsymbol{\mathit{D}}}
\newcommand\II{\boldsymbol{\mathit{I}}}
\newcommand\LL{\boldsymbol{\mathit{L}}}
\newcommand\MM{\boldsymbol{\mathit{M}}}
\newcommand\WW{\boldsymbol{\mathit{W}}}
\newcommand\bb{\boldsymbol{\mathit{b}}}
\newcommand\rr{\boldsymbol{\mathit{r}}}
\newcommand\ww{\boldsymbol{\mathit{w}}}
\newcommand\xx{\boldsymbol{\mathit{x}}}
\newcommand{\cc}{\boldsymbol{\mathit{c}}}
\newcommand{\hh}{\boldsymbol{\mathit{h}}}

\newcommand\xxtil{\widetilde{\boldsymbol{\mathit{x}}}}

\newcommand\ttau{\boldsymbol{\tau}}
\newcommand\ttautil{\widetilde{\boldsymbol{\tau}}}
\newcommand\ttheta{\boldsymbol{\theta}}

\newcommand\uu{\boldsymbol{\mathit{u}}}
\newcommand\vv{\boldsymbol{\mathit{v}}}

\newcommand{\eps}{\varepsilon}
\newcommand{\diag}{\mathsf{diag}}
\newcommand{\tr}{\mathsf{Tr}}
\renewcommand{\Otil}{\widetilde{O}}
\newcommand{\cT}{\mathcal{T}}
\renewcommand{\det}{\mathsf{det}}

\newcommand{\calD}{\mathcal{D}}
\newcommand{\Var}{\mathrm{Var}}

\title{Approximate Spanning Tree Counting
from \\ Uncorrelated Edge Sets}

\author{Yang P. Liu\\
CMU\\
\texttt{yangl7@andrew.cmu.edu}
\and
Richard Peng\\
CMU\\
\texttt{yangp@cs.cmu.edu}
\and
Junzhao Yang\\
CMU\\
\texttt{junzhaoy@andrew.cmu.edu}}

\begin{document}

\maketitle

\begin{abstract}
We show an $\widetilde{O}(m^{1.5} \epsilon^{-1})$\footnote{We use $\widetilde{O}(\cdot)$ to suppress polylogarithmic factors in $m$.} time algorithm that on a graph with $m$ edges and $n$ vertices outputs its spanning tree count up to a multiplicative $(1+\epsilon)$ factor with high probability,
improving on the previous best runtime of $\widetilde{O}(m + n^{1.875}\epsilon^{-7/4})$ in sparse graphs.
While previous algorithms were based on computing Schur complements and determinantal sparsifiers, our algorithm instead repeatedly removes sets of uncorrelated edges found using the electrical flow localization theorem of Schild-Rao-Srivastava [SODA 2018].
\end{abstract}

\section{Introduction}
\label{sec:intro}

Kirchoff's matrix tree theorem is one of the most important connections
between algebraic and combinatorial graph theory.
It states that in undirected weighted graphs, the total spanning
tree weight is the determinant of a minor of the graph Laplacian matrix.
Laplacian matrices~\cite{V13:book},
determinants~\cite{KT12:book},
spanning tree distributions~\cite{B89},
and related quantities~\cite{L93:survey}
are all topics central to combinatorial optimization,
graph theory, and continuous algorithms.

Algorithmic studies revolving around the matrix tree theorem,
both in determinant estimation and spanning tree sampling,
have motivated the development of a multitude of tools.
They include graph partitioning~\cite{KM09},
structural properties of effective resistances~\cite{MST15,DKPRS17,S18},
matrix concentration~\cite{DPPR20:journal},
and Markov chain mixing~\cite{ALGV19,CGM19,ALGVV21,ALV22}.
While the last of these led to near-optimal algorithms
for sampling spanning trees, the current fastest algorithms
for estimating spanning tree counts up to $(1+\epsilon)$-multiplicative approximations are still only just
below quadratic time in sparse graphs, at about $m + n^{15/8}$ \cite{DPPR20:journal,CGPSSW23:journal}.

The main difficulty of estimating the Laplacian determinant/spanning tree
count is that the value to be computed is quite large (intuitively, the product of $n$ eigenvalues).
Put another way, the logarithm of the spanning tree count is about $n$, and we wish to estimate this quantity up to an additive $\epsilon$. In general, algorithms for estimating such quantities have runtimes depending polynomially on the relative accuracy, which is about $\epsilon/n$, which would lead to large runtimes.
Previous algorithms do repeated vertex elimination \cite{DPPR20:journal,CGPSSW23:journal}, and sparsify the resulting dense graph.
The number of edges in the sparsified graph is important in determining the runtime. In the even more general setting of rank-$n$ matroids in a universe of size $m$, the works \cite{AD20,ADVY22} proved that one can sparsify down to a universe of size about $O(n^2)$ while approximately preserving the number of bases. For spanning trees, \cite{DPPR20:journal} proved that one can sparsify down to an even smaller size of $\O(n^{1.5})$.
Our algorithm does not use a Schur complement and sparsify approach, and instead repeatedly deletes subsets of ``uncorrelated edges'' obtained
via the localization of electrical flows~\cite{SRS18}.
Our main result is an improved running time for estimating
spanning tree counts in sparse graphs.

\begin{theorem}
\label{thm:main}
There is a routine $\textsc{ApproxSpanningTree}(G, \eps)$
that takes as input an undirected graph $G$
with $n$ vertices, $m$ edges, and polynomially bounded edge weights,
along with an error threshold $0 < \eps < 1$,
and outputs in $\Otil(m^{1.5} \eps^{-1})$ time a $(1+\eps)$ multiplicative approximation of the spanning tree count of the graph $G$ with high probability.
\end{theorem}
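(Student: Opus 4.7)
The plan is to express $\log \tau(G)$ as a telescoping sum of $\widetilde{O}(\sqrt{m})$ batched edge-deletion log-ratios, and to process each batch in $\widetilde{O}(m \eps^{-1})$ time using the Schild--Rao--Srivastava (SRS) electrical-flow localization theorem. The classical contraction-deletion recurrence $\tau(G) = \tau(G \setminus e) + w_e\, \tau(G/e)$ combined with the identity $w_e R^G_{\mathrm{eff}}(e) = \tau(G/e)/\tau(G)$ yields, for every non-tree edge $e$,
\[
\log \tau(G) - \log \tau(G \setminus e) = -\log(1 - p_e(G)), \qquad p_e(G) := w_e R^G_{\mathrm{eff}}(e).
\]
Deleting non-tree edges one at a time, summing these corrections, and adding the explicit log-weight of the final spanning tree gives $\log \tau(G)$ exactly. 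Computing each $p_e$ in the current residual graph with a separate Laplacian solve costs $\widetilde{O}(m^2)$ overall; the savings will come from batching many deletions per round.

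\textbf{Batched log-determinants via SRS.} For a set $S$ of $k$ edges, a standard block Sherman--Morrison identity gives
\[
\log \tau(G) - \log \tau(G \setminus S) = -\log \det(\II - \MM_S),
\]
where $\MM_S$ is the $k \times k$ matrix with $(\MM_S)_{ef} = \sqrt{w_e w_f}\, b_e^\top \LL_G^+ b_f$. Its diagonal equals $(p_e)_{e \in S}$. If the off-diagonal Frobenius mass $\sum_{e \neq f}(\MM_S)_{ef}^2$ is at most $\widetilde{O}(\eps^2 / \sqrt{m})$, then expansion of $\log\det$ yields
\[
-\log \det(\II - \MM_S) = \sum_{e \in S} -\log(1 - p_e) \pm \widetilde{O}(\eps / \sqrt{m}),
\]
so a single batch is summarized by its diagonal. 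The SRS localization theorem is what lets us find such an $S$ of size $k = \Theta(\sqrt{m})$ cheaply: after bucketing edges by leverage and handling edges with $p_e$ extremely close to $0$ or $1$ by deletion or contraction, sampling $k$ edges proportionally to leverage produces a batch whose pairwise electrical-flow overlaps $b_e^\top \LL_G^+ b_f$ aggregate to $\widetilde{O}(1)$ in expectation, and a matrix-Chernoff argument converts this into the required Frobenius bound with high probability.

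\textbf{Runtime and main obstacle.} Each round runs one Johnson--Lindenstrauss sketch of effective resistances in $\widetilde{O}(m \eps^{-1})$ time (only the \emph{sum} of $\log(1-p_e)$ must be $\eps$-accurate, which loosens the JL sample count from $\eps^{-2}$ to $\eps^{-1}$) and then draws $S$ from the resulting leverage estimates; iterating for $\sqrt{m}$ rounds gives the target $\widetilde{O}(m^{1.5} \eps^{-1})$ bound. The crux is the uncorrelated-batch guarantee: showing that SRS localization on the residual graph persistently produces size-$\sqrt{m}$ batches with off-diagonal correlations small enough that the per-batch error is $\widetilde{O}(\eps / \sqrt{m})$, and that these errors actually telescope additively over the $\sqrt{m}$ rounds rather than compounding as the leverage distribution of the residual graph drifts. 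Handling this carefully, together with the edge cases where $p_e$ is near $0$ or $1$ (where the linearization of $\log(1-p_e)$ becomes fragile and must be absorbed by contracting or short-circuiting the extreme edges first), should be the main obstacle.
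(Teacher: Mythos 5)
Your proposal captures the right skeleton --- telescope the log-spanning-tree-count via batched edge deletions, expand the $\log\det$ of the $k\times k$ Schur-complement-like matrix $\MM_S$, and use SRS localization to make off-diagonals small --- but there are two quantitative gaps that prevent it from closing, and the paper's key technical lemmas exist precisely to bridge them.

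\textbf{Gap 1: the correlation budget.} You set $k = \Theta(\sqrt{m})$ and assert off-diagonal Frobenius mass $\widetilde{O}(\eps^2/\sqrt{m})$, but this is not what SRS sampling delivers. Localization with $\cc=\one_S$ bounds the total $\ell_1$ mass of the $|S|\times|S|$ correlation matrix by $\widetilde{O}(|S|)$, so a uniformly random $k$-subset has expected off-diagonal $\ell_1$ (and hence Frobenius, since entries are at most $1$) mass of order $\widetilde{O}(k^2/m)$. For $k=\sqrt{m}$ this is $\widetilde{O}(1)$ --- a factor of $\sqrt{m}/\eps^2$ larger than you need. The paper instead takes the much smaller batch size $k = \Theta(\eps\sqrt{m}/\mathrm{polylog}\,m)$ (so there are $\widetilde{O}(\sqrt{m}/\eps)$ rounds, not $\sqrt{m}$), and tracks an $\ell_1$ \emph{row-sum} bound $\rho = \widetilde{O}(k/m)$ on every edge (\cref{def:uncorrelated_edge_subset}, enforced by \cref{lem:get_uncorrelated}), which is what the Taylor-expansion lemma (\cref{lem:taylor}) actually consumes.

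\textbf{Gap 2: estimating the diagonal.} You claim that, since only the sum $\sum_e \log(1-p_e)$ needs to be $\eps$-accurate, a JL sketch with $\eps^{-1}$ rows suffices. This ignores the \emph{second-order bias} coming from the concavity of $\log$: if $\widetilde{p}_e$ has variance $\Theta(1/t)$ around $p_e$ (with $t$ JL rows), then $\E[\log(1-\widetilde p_e)] - \log(1-p_e) = -\tfrac12\Var[\widetilde p_e]/(1-p_e)^2 + \cdots = \Theta(1/t)$, a \emph{one-sided} error that does not cancel in the sum. Summed over $k$ edges per round and $m/k$ rounds, the total bias is $\Theta(m/t)$, forcing $t=\Theta(m/\eps)$ and a runtime of $\widetilde{O}(m^2/\eps)$. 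The paper's route around this is the heart of the argument and has no counterpart in your plan: \cref{lem:estimate_first} shows that for a $\rho$-uncorrelated set $F$ a \emph{single} Laplacian solve against the aggregated vector $\bb(F)=\sum_{e\in F}\ww_e^{1/2}\bb(e)$ produces, simultaneously for every $f\in F$, an estimate $\ttautil_f$ with \emph{additive} error at most $2\rho = \widetilde{O}(\eps/\sqrt m)$ --- far tighter than any affordable JL sketch --- because the cross-terms that pollute the estimate are exactly the off-diagonal correlations. The residual second-order bias in $\log(1-\ttautil_f)$ is then $O(\rho^2)$ per edge, and the leftover first-order term $\sum_f (\ttau_f-\ttautil_f)/(1-\ttautil_f)$ is handled by \cref{lem:estimate_second}, a Rademacher-sketch estimator that is exactly unbiased with variance $O(|F|\rho^2)$. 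These two lemmas are what allow the per-round error to be $\widetilde O(k\rho^2)$ and the total to be $O(\eps^2)$.

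One smaller point: to invoke the Taylor expansion you need every diagonal $1-p_e$ bounded away from $0$, i.e.\ all batch edges to have leverage bounded away from $1$. Rather than contracting near-bridge edges (which changes the graph in ways you would have to re-account for), the paper first eliminates degree-$\le 2$ vertices exactly (\cref{lem:partial_exact_elimination}), guaranteeing $m\ge\tfrac32 n$ so the average leverage is at most $2/3$, and then filters with a coarse JL sketch to obtain $|S|\ge\Omega(m)$ low-leverage candidates. Your proposal would need a comparably explicit mechanism.
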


This algorithm represents a different approach to Laplacian
based algorithms.
Even with a likely optimal effective resistance estimation
routine that produces $\epsilon$-approximate resistances
in $m/\epsilon$ time (the current best known runtime is about $m/\epsilon^{1.5}$ in general graphs \cite{CGPSSW23:journal}),
previous algorithms only get a running time of $m + n^{1.75}$ (slower than our algorithm in sparse graphs).
We believe ideas from our approach may be useful in improving other Laplcian based 
algorithms that utilize sparsification and elimination~\cite{BLNPSSSW20,BLLSSSW21,GLP21,BGJLLPS22}.

\section{Preliminaries}
\label{sec:prelims}

The graphs in this paper are undirected and connected,
with positive weights on edges.
We denote them as triples of vertices, edges, and weights:
$G = (V, E, \ww)$.
We use $n = |V|$ to denote the number of vertices,
$m = |E|$ to denote the number of edges.

The weight of a spanning tree $T$ is
the product of the weights of all edges in $T$.
The sum over all the trees of their weights is $\cT(G)$,
the total spanning tree weight of $G$.

The graph Laplacian matrix $\LL(G)$ is an $n$-by-$n$ matrix
given by putting the weighted degrees on the diagonals,
and the negation of edge weights on off diagonals.
It can alternately be defined as
a sum of rank-$1$ matrices, one per edge.
For an edge $e = uv$, we let its indicator vector $\bb(e)$
be the vector with $1$ in entry $u$, $-1$ in entry $v$:
this orientation can be arbitrary.
Then the Laplacian matrix is given by
\[
\LL\left( G \right)
=
\sum_{e} \ww_e  \bb(e) \bb(e)^{\top}
=
\BB^{\top} \WW \BB,
\]
where $\WW$ is a $m$-by-$m$ diagonal matrix with all the edge
weights, and $\BB \in \R^{E \times V}$ is the $m \times n$ edge-vertex
incidence matrix whose rows are the vectors $\bb(e)$.
When context is clear,
we omit the $(G)$ and just write $\LL$ to denote the Laplacian matrix.

For a connected graph $G$, the null space of $\LL(G)$
is precisely the span of the all-$1$s vector.
The pseudoinverse of $\LL(G)$, which inverts on the
space orthogonal to this vector, and is $0$ on this vector,
is given by $\LL(G)^{\dag}$.
We also define the non-zero determinant, $\det^+(\MM)$.
For a symmetric matrix $\MM$, its determinant is the product
of all the non-zero eigenvalues of $\MM$.

The matrix tree theorem states that
$\cT(G) = \frac1n\det^{+}(\LL(G))$.
The fraction of spanning trees that an edge $e$ is
involved in can also be calculated via the solution of
a system of linear equations.
Specifically, 
for any edge $e$ of $G$, the spanning tree weight of $G$ and $G \setminus \{e\}$ ($G$ with $e$ removed) are related by
\[
1 - \frac{\cT\left( G \setminus \left\{ e\right\}\right)}
{\cT\left( G \right)}
=
\ww_e \bb\left( e \right)^{\top} \LL\left( G \right)^{\dag} \bb\left( e \right).
\]
We denote this quantity as $\ttau_e$. In the literature, this is referred to as the \emph{leverage score} of the edge $e$, and is widely used throughout numerical linear algebra.

We will use a more sophisticated version of this
formula involving multiple edges.
However, we still need to solve systems of equations in $\LL$.
For this we need to use nearly-linear time Laplacian
solvers~\cite{ST14:journal}.

\begin{lemma}
\label{lem:laplacian_solver}
There is a routine $\textsc{Solve}(\LL, \bb)$ that
given a graph $G$ with polynomial bounded weights
and a vector $\bb$ orthogonal to $\vec{1}$,
returns in $\Otil(m)$ time a vector $\xxtil$  satisfying $\|\xxtil - \LL^{-1}\bb\|_2 \le 1/\poly(n)$.
\end{lemma}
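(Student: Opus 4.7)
The plan is to invoke the nearly-linear time Laplacian solver of Spielman--Teng~\cite{ST14:journal} (or any of its successors) essentially as a black box, and then perform a short argument to convert the standard relative-error guarantee into the $\ell_2$-additive guarantee stated in the lemma. Concretely, the standard guarantee produces, in $\Otil(m \log(1/\delta))$ time, a vector $\xxtil$ satisfying the $\LL$-norm relative bound $\|\xxtil - \LL^{\dag}\bb\|_{\LL} \le \delta \|\LL^{\dag}\bb\|_{\LL}$, where the solver implicitly works on the subspace orthogonal to $\vec{1}$ (so that $\LL^{\dag}$ is well defined on the given right-hand side). I would set $\delta = 1/\poly(n)$ with a sufficiently large polynomial; since $\log(1/\delta) = O(\log n)$, this only affects the runtime by a polylogarithmic factor, keeping the total at $\Otil(m)$.

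Next I would convert the $\LL$-norm bound to the $\ell_2$-norm bound by the standard chain
\[
\|\xxtil - \LL^{\dag}\bb\|_2 \le \lambda_{\min}^+(\LL)^{-1/2} \|\xxtil - \LL^{\dag}\bb\|_{\LL} \le \delta \lambda_{\min}^+(\LL)^{-1/2} \|\LL^{\dag}\bb\|_{\LL},
\]
where $\lambda_{\min}^+(\LL)$ is the smallest nonzero eigenvalue of $\LL$. Using the assumption that $G$ is connected with polynomially bounded weights, a standard estimate gives $\lambda_{\min}^+(\LL) \ge 1/\poly(n)$ and $\|\LL^{\dag}\bb\|_{\LL} \le \|\bb\|_2 / \sqrt{\lambda_{\min}^+(\LL)} \le \poly(n) \cdot \|\bb\|_2$. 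Since we may assume $\|\bb\|_2 \le \poly(n)$ (otherwise we simply rescale before the call and unscale after), choosing $\delta$ to be a sufficiently small inverse polynomial in $n$ yields the desired bound $\|\xxtil - \LL^{\dag}\bb\|_2 \le 1/\poly(n)$.

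The only step requiring any care is bounding $\lambda_{\min}^+(\LL)$ from below: for a connected graph on $n$ vertices with integer weights in $[1, \poly(n)]$, this follows from the matrix-tree theorem (the product of the nonzero eigenvalues equals $n \cdot \cT(G)$, which is a positive integer bounded by $\poly(n)^n$) together with the trace bound $\Tr(\LL) \le \poly(n)$, which controls each individual eigenvalue from above and hence forces the smallest one to be at least $1/\poly(n)$. All other steps are routine, so I do not anticipate any real obstacle: the lemma is essentially a repackaging of~\cite{ST14:journal} for the error model used in the remainder of the paper.
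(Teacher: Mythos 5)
The paper gives no proof of this lemma at all; it is stated as a black-box citation of the Spielman--Teng solver. Your overall plan (invoke the solver's $\LL$-norm relative-error guarantee with $\delta=1/\poly(n)$, then convert to an $\ell_2$ additive bound via $\|\vv\|_2\le\lambda_{\min}^+(\LL)^{-1/2}\|\vv\|_{\LL}$ and $\|\LL^\dag\bb\|_{\LL}\le\|\bb\|_2/\sqrt{\lambda_{\min}^+(\LL)}$) is the standard and correct way to make the black-box assertion precise, and the chain of inequalities you write is sound.

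The one genuine flaw is the sub-argument for $\lambda_{\min}^+(\LL)\ge 1/\poly(n)$. Knowing that the product of the $n-1$ nonzero eigenvalues is $n\cdot\cT(G)\ge 1$ and each eigenvalue is at most $\Tr(\LL)\le\poly(n)$ only forces $\lambda_{\min}^+(\LL)\ge 1/\poly(n)^{\,n-2}$, which is exponentially small, not polynomially small; and an exponentially small $\lambda_{\min}^+$ would force $\delta$ to be exponentially small, ruining the $\Otil(m)$ runtime. The fact you want is still true, but you need a different justification: for instance, for a unit vector $\xx\perp\vec 1$ one has $\max_i \xx_i - \min_j \xx_j \ge 1/\sqrt n$, and walking along a path of length at most $n-1$ between an arg-max and an arg-min vertex, some edge $(u,v)$ has $|\xx_u-\xx_v|\ge n^{-3/2}$, so $\xx^\top\LL\xx\ge \ww_{\min}/n^3 \ge 1/\poly(n)$. (Equivalently, cite the standard lower bound on the algebraic connectivity of a connected graph with polynomially bounded weights.) With that repair, the argument is complete.
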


We keep all entries polynomial bounded, and work with polynomially small
error to simplify away issues involving bit complexity and
representations of numbers.

For a random variable $X$ we let $\E[X]$ denote its expectation and $\Var[X]$ denote its variance.
\section{Overview}
\label{sec:algorithm}

Our algorithm is based on repeatedly removing a subset of edges
$F \subseteq E(G)$ and relating the spanning tree counts of the two graphs
using determinant expansion
\begin{multline*}
\det^+\left( \LL\left( G \right)  - \LL\left( F \right) \right)
=
\det^+\left( \LL\left( G \right)  - \BB\left( F \right)^{\top} \WW\left( F \right) \BB\left( F \right)\right)\\
=
\det^+\left( \LL\left( G \right)\right)
\cdot
\det\left( \II - \LL\left( G \right)^{\dag 1/2}
\BB\left( F \right)^{\top} \WW\left( F \right) \BB\left( F \right)
\LL\left( G \right)^{\dag 1/2}\right).
\end{multline*}
The latter term can be turned into the determinant of an $|F|$-by-$|F|$
matrix via the fact
\begin{fact}
For any (possibly asymmetric) matrix $\AA$,
$\det(\II - \AA \AA^{\top}) = \det(\II - \AA^{\top} \AA)$.
\end{fact}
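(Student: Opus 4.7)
The plan is to prove this classical identity (the Weinstein–Aronszajn / Sylvester determinant identity) via a block matrix computation. First I would fix dimensions: say $\AA \in \R^{n \times k}$, so that $\AA \AA^\top$ is $n \times n$ and $\AA^\top \AA$ is $k \times k$. I would then form the $(n+k) \times (n+k)$ auxiliary matrix
\[
\MM = \begin{pmatrix} \II_n & \AA \\ \AA^\top & \II_k \end{pmatrix}
\]
and compute $\det(\MM)$ in two different ways by block LDU factorization.

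Pivoting on the top-left $\II_n$ block, the Schur complement of $\II_n$ in $\MM$ is $\II_k - \AA^\top \AA$, giving $\det(\MM) = \det(\II_n)\det(\II_k - \AA^\top \AA) = \det(\II_k - \AA^\top \AA)$. Pivoting instead on the bottom-right $\II_k$ block, the Schur complement of $\II_k$ is $\II_n - \AA \AA^\top$, giving $\det(\MM) = \det(\II_n - \AA \AA^\top)$. Equating the two expressions yields the desired identity.

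The only routine computation is verifying the block LDU factorizations
\[
\MM = \begin{pmatrix} \II_n & 0 \\ \AA^\top & \II_k \end{pmatrix} \begin{pmatrix} \II_n & 0 \\ 0 & \II_k - \AA^\top \AA \end{pmatrix} \begin{pmatrix} \II_n & \AA \\ 0 & \II_k \end{pmatrix}
\]
and its analogue pivoting on the bottom-right block, each of which follows by direct multiplication. There is no genuine obstacle: the identity pivots $\II_n$ and $\II_k$ are trivially invertible, so both Schur complements are defined, and the two evaluations of $\det(\MM)$ force the equality. An alternative route via the singular value decomposition of $\AA$ would also work, since $\AA \AA^\top$ and $\AA^\top \AA$ share the squared singular values of $\AA$ as their nonzero eigenvalues, with the surplus zero eigenvalues contributing factors of $1-0=1$ on each side; I would default to the block argument as it is purely algebraic and requires no spectral machinery.
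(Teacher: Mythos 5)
Your block-LDU argument is correct and complete, but it takes a genuinely different route from the paper. The paper dispatches the fact in one line by appealing to the spectral statement that $\AA^\top\AA$ and $\AA\AA^\top$ have the same nonzero eigenvalues (with multiplicities), so the determinants of $\II-\AA\AA^\top$ and $\II-\AA^\top\AA$ agree because the extra eigenvalues on the larger side are all zero and contribute factors of $1$ --- this is exactly the ``alternative route via the SVD'' you mention and set aside. Your chosen argument instead evaluates $\det\begin{pmatrix}\II_n & \AA\\ \AA^\top & \II_k\end{pmatrix}$ by two Schur complements. The trade-offs are as you'd expect: the block-determinant proof is purely algebraic, needs no spectral theorem, and works verbatim over any commutative ring (indeed it proves the stronger Weinstein--Aronszajn identity $\det(\II-\AA\BB)=\det(\II-\BB\AA)$ for any compatibly-sized $\AA,\BB$), whereas the paper's eigenvalue argument is shorter and fits the spectral flavor of the surrounding determinant manipulations, at the cost of invoking the nontrivial fact about shared nonzero spectra. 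Both are valid; for the purposes of this paper the spectral one-liner is the more economical choice, but nothing is lost by using yours.
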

\begin{proof}
Follows from the fact that $\AA^\top \AA$ and $\AA \AA^\top$ have the same nonzero eigenvalues.
\end{proof}

Throughout, it is more convenient to track logarithmic error.
That is, we instead approximate
\begin{equation}
\log \det\left( \II -
\WW\left( F \right)^{1/2}\BB\left( F \right)
\LL\left( G \right)^{\dag}
\BB\left( F \right)^\top \WW\left( F \right)^{1/2} \right). \label{eq:m}
\end{equation}
We will try to build an approximator to the above expression that has low additive error and low variance.
To this end, define the $|F| \times |F|$ matrix $\MM =  \II -
\WW\left( F \right)^{1/2}\BB\left( F \right)
\LL\left( G \right)^{\dag}
\BB\left( F \right)^\top \WW\left( F \right)^{1/2}$, so that our goal is to estimate $\log \det (\MM)$. The entries of $\MM$ are:
\[
\MM_{e, f}
=
\begin{cases}
1 - \ttau_e
& \text{if $e = f$}\\
-\ww_{e}^{1/2} \bb\left( e \right)^{\top}
\LL^{\dag}
\bb\left( f \right) \ww_f^{1/2}
& \text{if $e \neq f$}
\end{cases}
\]
The first key observation is that when the off-diagonals are sufficiently
small, $\det(\MM)$ is approximately the product of its diagonal terms.

\begin{lemma}
\label{lem:taylor}
If $\MM$ is a $k$-by-$k$ matrix such that $\MM_{ii} \ge 0.1$ for all $i \in [k]$ and $\sum_{j \neq i} |\MM_{ij}| \le \rho$ for all $i \in [k]$ for some $\rho \le 0.01$, then
\[
\left|\log \det\left(M\right)
- \sum_{i = 1}^{k} \log \left(\MM_{ii} \right)\right|
\le
O\left(\rho^2  k \right).
\]
\end{lemma}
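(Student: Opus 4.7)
The plan is to split $\MM$ into its diagonal and off-diagonal parts, factor out the diagonal inside the determinant, and then bound the remaining log-determinant by a trace series whose leading (linear) term vanishes.

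Concretely, write $\MM = \DD + \EE$ where $\DD = \diag(\MM_{11},\dots,\MM_{kk})$ and $\EE = \MM - \DD$ has zero diagonal. Since $\MM_{ii} \ge 0.1$, $\DD$ is invertible and
\[
\det(\MM) = \det(\DD) \cdot \det(\II + \DD^{-1}\EE),
\qquad
\log \det(\MM) - \sum_{i=1}^k \log(\MM_{ii}) = \log \det(\II + \DD^{-1}\EE).
\]
The hypotheses $\MM_{ii}\ge 0.1$ and $\sum_{j\ne i}|\MM_{ij}|\le \rho$ give that every row of $\DD^{-1}\EE$ has $\ell_1$-norm at most $10\rho \le 0.1$, i.e.\ $\|\DD^{-1}\EE\|_\infty \le 10\rho$ in the max-row-sum operator norm. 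It then suffices to show $|\log \det(\II + \DD^{-1}\EE)| \le O(\rho^2 k)$.

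For this I would use the Taylor expansion
\[
\log \det(\II + X) = \sum_{j \ge 1} \frac{(-1)^{j+1}}{j} \tr(X^j),
\]
which converges absolutely whenever the spectral radius of $X$ is less than $1$; this holds for $X = \DD^{-1}\EE$ because spectral radius is dominated by any induced operator norm, and $\|\DD^{-1}\EE\|_\infty \le 10\rho \le 0.1$. The crucial observation is that the $j=1$ term is $\tr(\DD^{-1}\EE) = 0$, since $\EE$ has zero diagonal. For $j\ge 2$, I would bound
\[
|\tr(X^j)| = \Bigl|\sum_{i=1}^k \lambda_i(X)^j\Bigr| \le k \cdot \big(\text{spectral radius of }X\big)^j \le k\,\|X\|_\infty^j \le k\,(10\rho)^j,
\]
and then sum the geometric tail
\[
|\log \det(\II + \DD^{-1}\EE)| \le \sum_{j\ge 2}\frac{k (10\rho)^j}{j} \le \frac{(10\rho)^2 k}{1 - 10\rho} = O(\rho^2 k),
\]
which is the claimed bound.

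The main conceptual point — and what prevents a naive $O(\rho k)$ bound — is that the first-order trace term vanishes, so the error starts at $\tr(X^2)$, producing the $\rho^2$ dependence. The only mild subtlety is that $\MM$ need not be symmetric, so $\DD^{-1}\EE$ has complex eigenvalues and one cannot pass through the spectral norm; the right control is to observe that $|\tr(X^j)|$ is governed by the spectral radius, which the $\ell_\infty$ row-sum hypothesis bounds directly.
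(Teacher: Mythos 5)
Your proof is correct and follows the same overall strategy as the paper (split off the diagonal, factor it out of the determinant, Taylor-expand the remaining log-determinant as a trace series, and observe the first-order term vanishes), but with a technically different — and arguably more robust — route for bounding the tail. The paper conjugates symmetrically, writing $\log\det(\MM) = \sum_i \log \DD_{ii} + \log\det(\II + \DD^{-1/2}\AA\DD^{-1/2})$, and then asserts $\|\AA\|_2 \le \rho$ before bounding $\Tr[(\DD^{-1/2}\AA\DD^{-1/2})^t] \le (10\rho)^t k$; the bound $\|\AA\|_2 \le \rho$ relies on $\AA$ having both row and column $\ell_1$-sums at most $\rho$ (via the Schur test $\|\AA\|_2 \le \sqrt{\|\AA\|_1\|\AA\|_\infty}$), which is implicitly using the symmetry of $\MM$ even though the lemma statement does not assume it. You instead use the one-sided normalization $\DD^{-1}\EE$, bound its induced $\ell_\infty$ (max-row-sum) norm directly by $10\rho$ from the stated hypothesis, and control $|\tr((\DD^{-1}\EE)^j)|$ via the spectral radius, which is dominated by any induced operator norm. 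This is cleaner for the lemma as literally stated, since it needs only the row-sum hypothesis and no symmetry; the price is that you must justify the scalar log-det Taylor series for non-normal matrices, which you correctly do by noting the spectral radius is below $1$ and that eigenvalues of a real matrix come in conjugate pairs so that $\log\det(\II + X)$ is a genuine real number equal to $\sum_{j\ge 1}\frac{(-1)^{j+1}}{j}\tr(X^j)$. Both approaches exploit the same crucial cancellation $\tr(\DD^{-1}\EE) = 0$ (equivalently $\tr(\DD^{-1/2}\AA\DD^{-1/2}) = 0$) to get the $\rho^2$ rather than $\rho$ dependence.
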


\begin{proof}
Let $\DD$ be the diagonal matrix consisting of the diagonal
of $\MM$, and $\AA = \MM - \DD$ be the off-diagonal entries.
Note that it is important that the diagonal entries of $\AA$ are $0$.

The log determinant of $\MM$, up to a scaling by $\DD$
can be written as
\begin{align*}
\log\det\left(\MM\right)
&=
\log\det\left(\DD + \AA\right)
=
\log\det\left(\DD\right) + \log\det\left(\II + \DD^{-1/2}\AA\DD^{-1/2}\right)  \\
&=
\left( \sum_{i = 1}^{k} \log\left(\DD_{ii}\right) \right)
+ \log\det\left(\II + \DD^{-1/2}\AA \DD^{-1/2}\right) .
\end{align*}
The conclusion of the lemma is equivalent to bounding the magnitude of the second term.
For a symmetric matrix, the spectral theorem gives
that the log determinant is the trace of the log.
So symmetrizing and applying Taylor expansion to
$\log(\II + \DD^{-1/2} \AA \DD^{-1/2})$ gives
\[
\tr\left[ \log\left( \II + \DD^{-1/2} \AA \DD^{-1/2} \right) \right]
=
\sum_{t \ge 1} \frac{(-1)^{t-1}}{t} \tr\left[ \left(  \DD^{-1/2} \AA \DD^{-1/2} \right)^t \right]
\]
The $t = 1$ term vanishes because $\DD$ is diagonal and
$\AA$ has $0$ on the diagonal.

To bound the $t \ge 2$ terms, note that
$\|\AA\|_2 \leq \rho$ and $\|\DD^{-1/2}\|^2_2 \leq 10$.
Thus
$\Tr[(\DD^{-1/2} \AA \DD^{-1/2})^t]$ is at most $(10\rho)^tk$,
and the bound follows from summing over $t \ge 2$.
\end{proof}

\cref{lem:taylor} motivates the definition of an \emph{uncorrelated edge set}, which is critical to the remainder of the work. Essentially, a $\rho$-correlated edge set is one where $\sum_{j \neq i} |\MM_{ij}| \le \rho$. In this language, \cref{lem:taylor} is saying that if a set of edges $F$ is $\rho$-uncorrelated, then the events that each edge $e$ is in a spanning tree are very nearly independent.

\begin{definition}
\label{def:uncorrelated_edge_subset}
In a graph $G$, a subset of edges $F \subseteq E$
is $\rho$-correlated if for all $f \in F$ we have
\[
\sum_{e \in F, e \neq f}
\left| \ww_e^{1/2}
\bb\left( e\right)^\top \LL^{\dag} \bb\left( f \right) \ww_{f}^{1/2} \right|
\le
\rho.
\]
\end{definition}

If the set $F$ of edges is sufficiently uncorrelated, it suffices to estimate
\[ \sum_{e \in F} \log \MM_{ee} = \sum_{e \in F} \log(1-\ttau_e). \] To do this, we first use again that $F$ is uncorrelated to produce estimates $\ttautil_e \approx \ttau_e$, do another Taylor expansion, and estimate the first order term (again, the second order and higher terms have negligible contribution).
Before doing this formally, we first develop some useful properties of uncorrelated edges in \cref{sec:uncorrelated_edge_subsets} before giving the overall algorithm in \cref{sec:ball}.

\section{Finding and Using Uncorrelated Edge Subsets}
\label{sec:uncorrelated_edge_subsets}

Uncorrelated edge subsets stem from the electrical flow
localization result by Schild-Rao-Srivastava.
They were first initially used in random spanning tree
sampling~\cite{S18}, but have then been used to construct
Schur complements of graphs which are minors~\cite{LS18}, which has led to near-optimal distributed Laplacian system solvers~\cite{FGLPSY21}.

We first give algorithms that efficiently estimate the leverage scores of all edges within an uncorrelated edge subset, and also give low-variance estimates for linear combinations of leverage scores and their higher moments. This is a key idea of this work.
Afterwards, we describe a standard algorithm for identifying large uncorrelated edge subsets
in \cref{subsec:find_uncorrelated_subsets}. These algorithms are already in the works \cite{LS18,S18}, but we reproduce their short arguments for completeness.

\subsection{Estimating Leverage Scores}

The most direct use of uncorrelated edge subsets is that
we can find each edge's leverage score from the same Laplacian solve,
to a good additive error.

\begin{lemma}
\label{lem:estimate_first}
Given $G = (V, E, \ww)$ with polynomially bounded edge weights
and a $\rho$-correlated edge subset $F$ for $\rho > \frac{1}{\poly(n)}$,
there is an algorithm that in $\O(m)$ time produces estimates $\ttautil_f$ for all $f \in F$ satisfying $|\ttautil_f - \ttau_f| \le 2\rho$ with high probability.
\end{lemma}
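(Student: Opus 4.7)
The plan is to use a Johnson--Lindenstrauss style sketch and exploit the uncorrelated property to show the resulting estimator has small variance. Recall that $\tau_f$ is the $(f,f)$-entry of the matrix $\WW(F)^{1/2}\BB(F)\LL^\dagger\BB(F)^\top\WW(F)^{1/2}$, which (up to sign) is the off-diagonal of $\MM$ from the preamble. Equivalently, writing $\MM = \II - \WW(F)^{1/2}\BB(F)\LL^\dagger\BB(F)^\top\WW(F)^{1/2}$, the task reduces to estimating the diagonal $\MM_{ff} = 1-\tau_f$ to additive error $2\rho$.

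The algorithm I would run is: draw $k = \Theta(\log n)$ independent random sign vectors $q^{(1)},\dots,q^{(k)}\in\{\pm 1\}^F$. For each $i$, form $y^{(i)} = \BB(F)^\top \WW(F)^{1/2} q^{(i)}\in\R^V$ (which is orthogonal to $\vec 1$), invoke \cref{lem:laplacian_solver} to get $x^{(i)}\approx \LL^\dagger y^{(i)}$, and for every $f\in F$ compute
\[
z^{(i)}_f \;=\; q^{(i)}_f \;-\; \ww_f^{1/2}\, \bb(f)^\top x^{(i)},
\]
which is an approximation to $(\MM q^{(i)})_f$. Finally set $\ttautil_f := 1 - \tfrac1k \sum_{i=1}^k q^{(i)}_f z^{(i)}_f$. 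Each Laplacian solve is $\Otil(m)$, and each $f$ costs $O(1)$ time per sketch vector once $x^{(i)}$ is computed, so the total runtime is $\Otil(m)$ as needed.

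For correctness, expand in exact arithmetic:
\[
q^{(i)}_f (\MM q^{(i)})_f \;=\; \MM_{ff} \;+\; q^{(i)}_f \sum_{e \in F,\, e\neq f} \MM_{ef}\, q^{(i)}_e.
\]
The noise term $X_i := q^{(i)}_f \sum_{e\neq f}\MM_{ef}q^{(i)}_e$ has mean zero, and by the $\rho$-uncorrelated hypothesis (\cref{def:uncorrelated_edge_subset}) satisfies the deterministic bound $|X_i| \le \sum_{e\neq f}|\MM_{ef}| \le \rho$. Hoeffding's inequality then gives $\Pr\bigl[|\tfrac1k\sum_i X_i| > \rho\bigr] \le 2\exp(-\Omega(k))$, so $k=\Theta(\log n)$ samples yields an estimate within $\rho$ of $\MM_{ff}$ for a single $f$ with high probability, and a union bound over $|F|\le m$ edges preserves this.

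The last step is to absorb the Laplacian solver's $1/\poly(n)$ error. Since edge weights are polynomially bounded, $\|\WW(F)^{1/2}\BB(F)\|_2 \le \poly(n)$, so the difference between the computed $z^{(i)}_f$ and the exact $(\MM q^{(i)})_f$ is at most $1/\poly(n)$; taking the solver accuracy small enough keeps the total additive error below $\rho + 1/\poly(n) \le 2\rho$ using the assumption $\rho > 1/\poly(n)$. The only nontrivial step is the variance bound, and even that is immediate once one observes that the uncorrelated condition is exactly what makes $|X_i|$ bounded by $\rho$; the rest is bookkeeping.
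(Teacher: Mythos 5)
Your proposal is correct, and its core mechanism is the same as the paper's: apply $\LL^{\dag}$ to a weighted linear combination of the vectors $\bb(e)$, read off $\bb(f)^{\top}\LL^{\dag}(\cdot)$, and bound the cross-terms by the $\rho$-correlated hypothesis (\cref{def:uncorrelated_edge_subset}). The one place you over-engineer is the randomization: you draw $\Theta(\log n)$ Rademacher vectors $q^{(i)}$ and invoke Hoeffding, but your own calculation shows the noise term $X_i = q^{(i)}_f\sum_{e\neq f}\MM_{ef}q^{(i)}_e$ satisfies $|X_i| \le \sum_{e\neq f}|\MM_{ef}| \le \rho$ \emph{deterministically}, so the average is already within $\rho$ of $\MM_{ff}$ with no concentration argument needed and a single test vector suffices. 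The paper takes exactly this simplification: it uses the single fixed test vector $\bb(F) = \sum_{e\in F}\bb(e)\ww_e^{1/2}$ (i.e., $q\equiv\vec 1$), does one Laplacian solve, and applies the triangle inequality to bound the error by $\rho$ plus the solver's $1/\poly(n)$ additive error, giving the stated $2\rho$. Your version still runs in $\Otil(m)$ since it costs $\Theta(\log n)$ solves, so it is correct but strictly more complicated; note that because the error is deterministic, this lemma should not really need ``with high probability'' except for the solver-accuracy bookkeeping.
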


\begin{proof}
For each $f \in F$, return the estimate
\[
\ttautil_f := \ww_f^{1/2} \bb\left( f \right)^{\top} \LL^{\dag}
\sum_{e} \bb\left( e \right) \ww_{e}^{1/2}. \]
By the triangle inequality, the error 
is at most
\begin{align*}
|\ttautil_f - \ttau_f| &= \left|\ww_f \bb(f)^\top \LL^{\dag} \bb(f) - \ww_f^{1/2} \bb\left( f \right)^{\top} \LL^{\dag}
\sum_{e} \bb\left( e \right) \ww_{e}^{1/2} \right| \\
&\le \sum_{e \in F, e \neq f} \left|\ww_e^{1/2} \bb(e)^\top \LL^{\dag} \bb(f) \ww_f^{1/2}\right| \le \rho,
\end{align*}
where the final bound follows from \cref{def:uncorrelated_edge_subset}.

To implement this algorithm in $\Otil(m)$ time,
take the aggregated vector
\[
\bb\left(F\right)
=
\sum_{e \in F} \bb\left( e \right) \ww_{e}^{1/2},
\]
and solve for $\xx(F) = \LL^{\dag} \bb(F)$ in $\Otil(m)$ time.
Then for each $f \in F$, computing $\bb(f)^{\top} \xx(F)$ involves
reading off two entries and taking their differences.
The assumption of $\rho > \frac{1}{\poly(n)}$ means it suffices to
compute all these to $\frac{1}{\poly(n)}$ additive error,
so the running time follows from guarantees of Laplacian solver
in \cref{lem:laplacian_solver}.
\end{proof}

\subsection{Estimating Linear Combinations of Leverage Scores}

Combining sketching with the approach of \cref{lem:estimate_first} gives an algorithm that produces unbiased estimates of linear combinations of leverage scores with low variance.

\begin{lemma}
\label{lem:estimate_second}
Given a $\rho$-correlated subset $F$
and coefficients $\ttheta \in [0, 1]^{F}$,
we can produce in $\Otil(m)$ time an unbiased estimator
to $\sum_{f \in F} \ttheta_f \ttau_f$ with variance at most $2\rho^2 |F|$.
\end{lemma}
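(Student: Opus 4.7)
The plan is a one-shot Rademacher sketch combined with a single Laplacian solve. I would sample random signs $\sigma \in \{\pm 1\}^{F}$ independently and uniformly, form the aggregated vector
\[
\vv := \sum_{e \in F} \sigma_e \ttheta_e^{1/2} \ww_e^{1/2} \bb(e) \in \R^{V},
\]
which lies in the orthogonal complement of the all-ones vector, use \cref{lem:laplacian_solver} to obtain $\xxtil \approx \LL^{\dag} \vv$, and return the scalar $X := \vv^{\top} \xxtil$. Constructing $\vv$ and taking the final inner product each take $O(m)$ time, so the overall runtime is dominated by the single Laplacian solve and is $\Otil(m)$.

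To analyze this estimator, I would introduce the symmetric $|F| \times |F|$ matrix $\AA$ with entries $\AA_{ef} := \ttheta_e^{1/2} \ttheta_f^{1/2} \ww_e^{1/2} \ww_f^{1/2} \bb(e)^{\top} \LL^{\dag} \bb(f)$, and observe that expanding the quadratic form yields $\vv^{\top} \LL^{\dag} \vv = \sigma^{\top} \AA \sigma$. Up to the polynomially small error from the solver (which is negligible under the polynomially bounded weights assumption, by the same reasoning as in \cref{lem:estimate_first}), $X = \sigma^{\top} \AA \sigma$. Since $\E[\sigma_e \sigma_f] = \delta_{ef}$, we have $\E[X] = \tr(\AA) = \sum_{f \in F} \ttheta_f \ttau_f$, giving unbiasedness.

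For the variance, the standard identity for symmetric Rademacher quadratic forms gives $\Var[\sigma^{\top} \AA \sigma] = 2 \sum_{e \ne f} \AA_{ef}^2$. I would then invoke the $\rho$-correlation hypothesis twice: first, each individual off-diagonal satisfies $|\AA_{ef}| \le |\ww_e^{1/2} \bb(e)^{\top} \LL^{\dag} \bb(f) \ww_f^{1/2}| \le \rho$ because it is a single term of the sum bounded in \cref{def:uncorrelated_edge_subset} (and $\ttheta_e, \ttheta_f \in [0,1]$); second, for each fixed $f$ the whole row sum satisfies $\sum_{e \ne f} |\AA_{ef}| \le \rho$. Combining,
\[
\sum_{e \ne f} \AA_{ef}^2 \;\le\; \max_{e \ne f} |\AA_{ef}| \cdot \sum_{e \ne f} |\AA_{ef}| \;\le\; \rho \cdot |F| \rho \;=\; |F| \rho^2,
\]
which yields the desired bound of $2 \rho^2 |F|$.

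The only minor obstacle is confirming that the $1/\poly(n)$ error in the Laplacian solver does not corrupt either the unbiasedness or the variance bound; this is routine given polynomially bounded weights and $|F| \le m$, and I would dispatch it by absorbing the error into the additive $1/\poly(n)$ slack, exactly as in \cref{lem:estimate_first}.
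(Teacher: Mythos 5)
Your proof is correct and follows essentially the same approach as the paper: the same Rademacher sketch $\vv=\sum_{e\in F}\sigma_e\ttheta_e^{1/2}\ww_e^{1/2}\bb(e)$, the same estimator $\vv^{\top}\LL^{\dag}\vv$ computed via a single Laplacian solve, and the same use of $\rho$-correlation to control the off-diagonal contribution. Your variance bookkeeping (quoting $\Var[\sigma^{\top}\AA\sigma]=2\sum_{e\ne f}\AA_{ef}^2$ and then bounding by $\max|\AA_{ef}|\cdot\sum_{e\ne f}|\AA_{ef}|$) is just a compact rephrasing of the paper's direct second-moment expansion, and both yield $2\rho^2|F|$.
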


\begin{proof}
Generate independent Radamacher random variables $\rr_f = \pm 1$ with
equal probability over each edge $f \in F$,
then compute the vector
\[
\vv
\leftarrow
\sum_{f \in F}
\rr_f 
\ww_f^{1/2} \ttheta_f^{1/2} \bb\left( f \right)
\]
and return $\vv^{\top} \LL^{\dag} \vv$.
The expression expands into
\begin{multline*}
\left( \sum_{f \in F}
\rr_f 
\ww_f^{1/2} \ttheta_f^{1/2} \bb\left( f \right) \right)
\LL^{\dag}
\left( \sum_{f \in F}
\rr_f 
\ww_f^{1/2} \ttheta_f^{1/2} \bb\left( f \right) \right)\\
=
\sum_{f \in F} \ttheta_f \ww_f
\bb(f)^{\top} \LL^{\dag} \bb(f)
+
\sum_{e \in F, f \in F, e \neq f} \rr_e \rr_f
\cdot \ttheta_e^{1/2}\ttheta_f^{1/2}
\cdot \ww_{e}^{1/2} \bb\left( e \right)^{\top} \LL^{\dag}
\bb\left( f \right) \ww_f^{1/2}.
\end{multline*}
The first term is precisely the value that we are estimating,
so it suffices to bound the expectation and variance of the trailing term.

The independence of $\rr_e$ and $\rr_f$ means
$\E_{\rr_e, \rr_f} \rr_e \rr_f
= (\E_{\rr_e} \rr_e) \cdot (\E_{\rr_f} \rr_f) = 0$ if $e \neq f$.
So it remains to calculate the expected value of its second moment.
\begin{align*}
&\E_{\rr}
\left[
\left(\sum_{e \in F, f \in F, e \neq f} \rr_e \rr_f
\cdot \ttheta_e^{1/2}\ttheta_f^{1/2}
\cdot \ww_{e}^{1/2} \bb\left( e \right)^{\top} \LL^{\dag}
\bb\left( f \right) \ww_f^{1/2}\right)^2
\right]
\\
= ~& \sum_{\substack{e_1, e_2, f_1, f_2 \in F \\ e_1 \neq f_1, e_2 \neq f_2}}
\E_{\rr}
\left[\rr_{e_1} \rr_{e_2} \rr_{f_1} \rr_{f_2} \right]
\cdot
\ttheta_{e_1}^{1/2}\ttheta_{f_1} ^{1/2}
\ww_{e_1}^{1/2} \bb\left( e_1 \right)^{\top} \LL^{\dag}
\bb\left( f_1 \right) \ww_{f_1}^{1/2}
\cdot
\ttheta_{e_2}^{1/2}\ttheta_{f_2}^{1/2}
\ww_{e_2}^{1/2} \bb\left( e_2 \right)^{\top} \LL^{\dag}
\bb\left( f_2 \right) \ww_{f_2}^{1/2}\\
\\ = ~&
2\sum_{e \in F, f \in F, e \neq f}
\ttheta_e \ttheta_f
\left( \ww_{e}^{1/2} \bb\left( e \right)^{\top} \LL^{\dag}
\bb\left( f \right) \ww_f^{1/2} \right)^2
\leq
2\sum_{e \in F, f \in F, e \neq f}
\left( \ww_{e}^{1/2} \bb\left( e \right)^{\top} \LL^{\dag}
\bb\left( f \right) \ww_f^{1/2} \right)^2.
\end{align*}
Here the second equality follows because the expectation equals $0$ unless $e_2 = e_1, f_2 = f_1$ or $f_2 = e_1, e_2 = f_1$,
and the last inequality follows from $\ttheta_e \leq 1$.
Using that the set $F$ of edges is $\rho$-correlated, we get
\[
\sum_{e \in F, f \in F, e \neq f}
\left( \ww_{e}^{1/2} \bb\left( e \right)^{\top} \LL^{\dag}
\bb\left( f \right) \ww_f^{1/2} \right)^2
\leq
\sum_{f \in F}
\left( \sum_{e} \left| \ww_{e}^{1/2} \bb\left( e \right)^{\top} \LL^{\dag}
\bb\left( f \right) \ww_f^{1/2} \right| \right)^2
=
2\left| F \right| \rho^2.
\]
\end{proof}

\subsection{Identifying Uncorrelated Edge Subsets}
\label{subsec:find_uncorrelated_subsets}

Uncorrelated edge subsets are found by sampling a random subset of edges.
It relies on the fact that the total correlation between
all edges is small. This is known as localization of electrical flows.

\begin{theorem}[Localization of electrical flows~{\cite[Theorem 1.5]{SRS18}}]
\label{thm:localization}
For any undirected weighted graph $G = (V, E, \ww)$, and vector $c \in \R^m$, it holds that
\[ \sum_{e, f \in E}
\left|\cc_e\ww_e^{1/2} \bb\left( e \right)^\top \LL^{\dag} \bb\left( f \right)  \ww_f^{1/2}\cc_f\right|
\leq
O\left(\|\cc\|_2^2 \log^2 n\right).
\]
\end{theorem}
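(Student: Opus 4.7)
The plan is to recognize the summand as an entry of a symmetric orthogonal projection matrix, reduce the bilinear form to a per-row $\ell_1$ bound via AM-GM, and then invoke the graph-specific row-sum bound. Let $\Pi = \WW^{1/2} \BB \LL^{\dag} \BB^\top \WW^{1/2}$. Using $\BB^\top \WW \BB = \LL$ and the fact that each $\bb(e)$ lies in the range of $\LL$, one checks that $\Pi^2 = \Pi$, so $\Pi$ is a symmetric orthogonal projection of rank $n-1$ whose $(e,f)$-entry equals $\ww_e^{1/2} \bb(e)^\top \LL^{\dag} \bb(f) \ww_f^{1/2}$; in particular $\Pi_{ee} = \ttau_e \in [0,1]$. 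Applying $|\cc_e \cc_f| \le \tfrac{1}{2}(\cc_e^2 + \cc_f^2)$ together with the symmetry $|\Pi_{ef}| = |\Pi_{fe}|$ yields
\[
\sum_{e, f} |\cc_e \Pi_{ef} \cc_f| \;\le\; \tfrac{1}{2} \sum_{e, f} |\Pi_{ef}| (\cc_e^2 + \cc_f^2) \;=\; \sum_e \cc_e^2 \sum_f |\Pi_{ef}|,
\]
so it suffices to prove $\sum_f |\Pi_{ef}| \le O(\log^2 n)$ for every edge $e$.

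The projection property alone gives only the trivial Cauchy--Schwarz bound $\sum_f |\Pi_{ef}| \le \sqrt{m \cdot \Pi_{ee}} \le \sqrt{m}$, which is far too weak. For the polylogarithmic bound I would follow the short-cycle decomposition strategy of Schild--Rao--Srivastava: the row of $\Pi$ indexed by $e = uv$ is (up to a diagonal reweighting by $\WW^{\pm 1/2}$) the unit electrical flow between $u$ and $v$, and one iteratively covers the edges by cycles of length $O(\log n)$, using each cycle to reroute flow onto a short cycle through $e$ plus a small residual. Iterating $O(\log n)$ times contributes one $\log n$ factor for the cycle length and another for the recursion depth, giving $\|\Pi_{e \cdot}\|_1 \le O(\log^2 n) \cdot \|\Pi_{e \cdot}\|_2 = O(\log^2 n) \cdot \Pi_{ee}^{1/2} \le O(\log^2 n)$.

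The main obstacle is the short-cycle decomposition itself and its interaction with the electrical flow. The AM-GM reduction is essentially formal once $\Pi$ is identified as a projection and its symmetry is exploited, and would go through for any matrix with bounded row-$\ell_1$ norms. The genuine graph-theoretic content --- constructing a decomposition where a constant fraction of the edges lie on cycles of length $O(\log n)$, and verifying that rerouting along these cycles only blows up the $\ell_1$ mass by a $\log n$ factor per layer --- is precisely the delicate combinatorial and analytic work of SRS18 that cannot be circumvented.
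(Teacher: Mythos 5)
The paper does not prove \cref{thm:localization}; it is imported verbatim as a black box from Schild--Rao--Srivastava~\cite{SRS18}, so there is no in-text proof of the paper's to compare against. The question, then, is whether your sketch constitutes a plausible proof of the cited theorem, and here there is a real gap.

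Your AM--GM step is formally correct: writing $\Pi = \WW^{1/2}\BB\LL^{\dag}\BB^\top\WW^{1/2}$ (a symmetric projection with $\Pi_{ee}=\ttau_e$), symmetry gives
$\sum_{e,f}|\cc_e\Pi_{ef}\cc_f| \le \sum_e \cc_e^2 \sum_f |\Pi_{ef}|$.
But this reduces the theorem to the \emph{strictly stronger} claim that \emph{every} row of $\Pi$ has $\ell_1$-norm $O(\log^2 n)$, i.e.\ that $\| |\Pi| \|_\infty \le O(\log^2 n)$, where $|\Pi|$ denotes the entrywise absolute value. The statement being proved is equivalent only to $\| |\Pi| \|_2 \le O(\log^2 n)$: since $|\Pi|$ is symmetric and entrywise nonnegative, Perron--Frobenius lets one take the extremal vector in the quadratic form to be nonnegative, so the supremum of $\sum_{e,f}|\cc_e||\Pi_{ef}||\cc_f| / \|\cc\|_2^2$ over all $\cc$ is exactly the spectral norm of $|\Pi|$. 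For symmetric matrices $\|\cdot\|_2 \le \|\cdot\|_\infty$, so the per-row bound you reduce to is strictly stronger, and it does \emph{not} follow from \cite[Theorem~1.5]{SRS18} (isolating a single row $e$ by choosing $\cc$ supported on that row's large entries blows $\|\cc\|_2^2$ up to the size of the support). To my knowledge the uniform per-row $\ell_1$ localization bound is not established in \cite{SRS18} or elsewhere, so you have reduced a known theorem to an unknown one.

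The second problem is the mechanism you invoke. You describe a ``short-cycle decomposition strategy of Schild--Rao--Srivastava,'' but short-cycle decompositions are a tool of a different line of work and not what SRS18 use; their proof bounds the full quadratic form $\sum_{e,f}|\cc_e\Pi_{ef}\cc_f|$ directly via a potential-function argument over a random elimination/partition process, precisely so that they never need to control a single worst-case row. So both the reduction target and the proof technique you attribute to SRS18 are off, and the honest admission at the end of your sketch is really an admission that the central inequality has not been proved.
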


Let us explain what the term ``localization'' means. In an unweighted graph, $|\bb(e)^{\top} \LL^{\dag} \bb(f)|$
represents the amount of flow on edge $e$ in the electrical flow
routing $1$ unit of demand between the two endpoints of $f$.
So this total summed over $e$ and $f$ being small can be interpreted
as saying that on average, each electrical flow stays close to
its starting/ending locations.

Our main procedure for finding uncorrelated edge subsets is below.
We state it as finding a subset of some initial set $S$
because \cref{lem:taylor} also needs the diagonal terms of $\MM$ to be large,
i.e., $\ttau_e$ is bounded away from $1$.
So we first restrict $S$ to these edges before taking a further subset:
in graphs that are not too sparse (which we will guarantee by removing low-degree vertices in \cref{lem:partial_exact_elimination}),
invoking standard leverage score estimation algorithms,
and keeping only edges with estimates bounded away from $1$
can guarantee $|S| \geq \Omega(m)$.

\begin{lemma}
\label{lem:get_uncorrelated}
There is an algorithm \textsc{GetUncorrelated}($G = (V, E, \ww)$, $S$, $k$) that given a weighted graph $G$, a subset
of edges $S \subseteq E$, and integer $k \le |S|/2$,
in time $\Otil(m)$ outputs a $O(\frac{k}{|S|} \log^2{n})$-correlated
subset $F \subseteq S$ of size $k$ with high probability.
\end{lemma}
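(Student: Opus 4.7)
The plan is to prove the lemma by a standard random sampling argument on top of \cref{thm:localization}, focusing on the main case $k \le |S|/8$ (the complementary case $k > |S|/8$ is handled by direct Markov on $\sigma_f(S) := \sum_{e \in S,\, e \neq f} \left| \ww_e^{1/2} \bb(e)^{\top} \LL^{\dag} \bb(f) \ww_f^{1/2} \right|$, picking any $k$ of the $\ge |S|/2$ edges whose $\sigma_f(S)$ is at most twice the average). The algorithm samples each edge $e \in S$ into a candidate set $F_0$ independently with probability $p := 8k/|S|$, identifies and discards the edges whose correlation inside $F_0$ exceeds a threshold $T = \Theta((k/|S|) \log^2 n)$, and returns any $k$ of the survivors. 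A Chernoff bound guarantees $|F_0| \ge 4k$ with high probability.

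For correctness I would apply \cref{thm:localization} to the $0/1$ indicator vector $\cc \in \R^m$ of $S$, which yields
\[
\sum_{e, f \in S,\, e \neq f} \left| \ww_e^{1/2} \bb(e)^{\top} \LL^{\dag} \bb(f) \ww_f^{1/2} \right| \le O\!\left(|S| \log^2 n\right).
\]
Writing $\sigma_f(F_0)$ for the analog of $\sigma_f(S)$ restricted to $F_0$, independence of the samples gives
\[
\E\!\left[ \sum_{f \in F_0} \sigma_f(F_0) \right] = p^2 \sum_{e, f \in S,\, e \neq f} \left| \ww_e^{1/2} \bb(e)^{\top} \LL^{\dag} \bb(f) \ww_f^{1/2} \right| \le O\!\left( k^2 \log^2 n / |S| \right).
\]
A Markov bound on this total, amplified by $O(\log n)$ independent repetitions and keeping the best outcome, then shows that with high probability at most $k/10$ edges $f \in F_0$ satisfy $\sigma_f(F_0) > T$. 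Removing them leaves a set of size $\ge 4k - k/10 \ge k$ whose pairwise correlations are bounded by $T$ as required by \cref{def:uncorrelated_edge_subset}; truncating to exactly $k$ elements gives the output $F$.

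The main technical obstacle is identifying the bad edges inside the $\Otil(m)$ time budget, since a direct per-edge Laplacian solve costs $\Otil(m k)$ and enumerating the dense $|F_0| \times |F_0|$ off-diagonal block costs $\Theta(k^2)$. To stay inside the budget I would follow \cite{S18, LS18} and use a Johnson--Lindenstrauss effective-resistance sketch: a single $O(\log n) \times m$ random sign matrix $\Pi$ combined with $O(\log n)$ Laplacian solves produces short vectors $y_e \in \R^{O(\log n)}$ whose inner products approximate the off-diagonal entries $\ww_e^{1/2} \bb(e)^{\top} \LL^{\dag} \bb(f) \ww_f^{1/2}$, and an outer $\ell_1$-sketch layer (via Cauchy random variables) then estimates each $\sigma_f(F_0)$ up to a constant factor in $\Otil(m)$ total time. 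Since constant-factor distortions are absorbed by slightly inflating $T$, the counting step above goes through unchanged.
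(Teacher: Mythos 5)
Your probabilistic argument (random subsample of $S$, total-correlation bound from \cref{thm:localization}, Markov to show that most sampled edges have small correlation within the sample, then truncate) is essentially the paper's argument, modulo cosmetic differences (Bernoulli sampling vs.\ a uniform size-$2k$ subset; you amplify with $O(\log n)$ repetitions, the paper keeps constant success probability per round and just resamples). That part is fine.

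The gap is in the $\Otil(m)$-time identification of bad edges, and I do not think your sketch plan works as stated. You want, for each $f \in F_0$, a constant-factor estimate of
\[
\sigma_f(F_0) = \sum_{e \in F_0,\, e \neq f} \bigl| \ww_e^{1/2} \bb(e)^\top \LL^\dag \bb(f) \ww_f^{1/2} \bigr|,
\]
i.e.\ the $\ell_1$ norm of the $f$-th column of $\MM$ with the diagonal removed, where $\MM = \WW_{F_0}^{1/2}\BB_{F_0}\LL^\dag\BB_{F_0}^\top\WW_{F_0}^{1/2}$. Two obstacles. First, Johnson--Lindenstrauss does not preserve individual inner products to constant multiplicative error: the guarantee is additive, $|\langle y_e,y_f\rangle - \MM_{ef}| \le O(\eps)(\|y_e\|^2+\|y_f\|^2) = O(\eps)(\ttau_e+\ttau_f)$, which is $\Theta(\eps)$ and can dwarf the true off-diagonal entries (which are on the scale $k\log^2 n/|S| \ll 1$). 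So a JL layer that ``approximates the off-diagonal entries'' up to constant factors is not available. Second, and independently, the diagonal $\MM_{ff}=\ttau_f=\Theta(1)$ dominates the whole column (the off-diagonal part sums to $O(k\log^2 n/|S|)$), so a Cauchy/$\ell_1$ sketch of the full column recovers $\|\MM \mathbf{1}_f\|_1 \approx \ttau_f$ up to a constant factor, and you cannot recover the tiny off-diagonal part by subtracting $\ttau_f$: you would need to know $\ttau_f$ to additive precision $\ll k\log^2 n/|S|$, which is not achievable in $\Otil(m)$ time via leverage-score sketches.

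The paper's fix is a random masking trick you are missing: draw $\hh \sim \mathrm{Unif}(\{0,1\}^{F_0})$ and form $4\,\diag(\hh)\,\MM\,\diag(\bar\hh)$, which has a \emph{zero} diagonal exactly (since $\hh_f\bar\hh_f=0$) and equals $\MM-\diag(\MM)$ in expectation. Averaging $O(\log n)$ independent such masks gives an entrywise $2$-factor approximation $\widetilde\MM$ of $\MM-\diag(\MM)$ with high probability (Chernoff plus union bound). One then applies the Cauchy sketch $\CC$ of \cref{thm:l1} to $\widetilde\MM$ and recovers the column $\ell_1$ norms; crucially, $\CC\widetilde\MM$ is computed \emph{exactly} via $O(\log n)$ Laplacian solves (because $\CC\,\diag(\hh_i)\WW^{1/2}\BB_{F_0}$ has only $O(\log n)$ rows), not approximately via a JL layer. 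This sidesteps both obstacles above. You should replace the JL-plus-subtraction plan with this masking construction (or an equivalent exact way of zeroing the diagonal before sketching).
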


Running this algorithm requires estimating the total correlation
between a subset of edges.
As in Schild's algorithms~\cite{LS18,S18}
we use of $\ell_1$-sketches for this:
the total correlation for an edge $f$ is the $\ell_1$ norm of
$\WW^{1/2} \BB^{\top} \LL^{\dag} \BB \WW^{1/2} \one_f$, restricted to coordinates excluding $f$ itself. We estimate this with $\ell_1$-stable, or Cauchy, distributions, plus a random sampling trick.

\begin{theorem}[\!\!{\cite[Theorem 3]{Indyk06}} stated for $\ell_1$]
\label{thm:l1}
Given an integer $d \ge 1, \delta \in (0, 1), \eps \in (0, 1)$ there is a random matrix $\CC \in
\R^{t \times d}$ with $t = O(\eps^{-2}\log(1/\delta))$ and an algorithm
$\textsc{Recover}(\uu, d, \delta, \eps)$ satisfying the following properties.
\begin{itemize}
\item For any vector $\vv \in \R^d$ the algorithm
$\textsc{Recover}(\CC \vv, d, \delta, \eps)$
with probability $1-\delta$ and runtime $\O(d)$ outputs an estimate $r$ with
\[
\left(1-\eps\right) \|\vv\|_1
\le
r
\le
\left(1+\eps\right) \|\vv\|_1.
\]
\item Every element of $\CC$ is independently sampled from a Cauchy distribution.
\end{itemize}
\end{theorem}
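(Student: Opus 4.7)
The plan is to exploit the $1$-stability property of the Cauchy distribution: if $X_1, \ldots, X_d$ are i.i.d.\ standard Cauchy random variables and $\vv \in \R^d$, then $\sum_{i=1}^{d} \vv_i X_i$ is distributed as $\|\vv\|_1 \cdot X$ for a standard Cauchy $X$. Taking $\CC \in \R^{t \times d}$ to have i.i.d.\ standard Cauchy entries and applying $1$-stability coordinatewise, the entries of $\CC\vv \in \R^t$ are $t$ independent copies of $\|\vv\|_1 \cdot X$. So estimating $\|\vv\|_1$ from $\CC\vv$ reduces to estimating the scale parameter of a Cauchy distribution from $t$ samples.

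The natural estimator, which the \textsc{Recover} procedure will compute, is the empirical median of $\{|(\CC\vv)_i|\}_{i=1}^{t}$. The reason this works is that the folded standard Cauchy has median exactly $1$: its CDF is $\frac{2}{\pi}\arctan(x)$, which equals $1/2$ at $x=1$. Therefore the empirical median of the $|(\CC\vv)_i|$ concentrates around $\|\vv\|_1 \cdot \mathrm{med}(|X|) = \|\vv\|_1$.

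The quantitative concentration comes from a standard anti-concentration step. The density of $|X|$ is $\frac{2}{\pi(1+x^2)}$, which equals $\frac{1}{\pi} > 0$ at $x=1$, so for sufficiently small $\eps$ there is an absolute constant $c > 0$ with
\[
\Pr[\,|X| \le 1-\eps\,] \le \tfrac{1}{2} - c\eps
\quad\text{and}\quad
\Pr[\,|X| \ge 1+\eps\,] \le \tfrac{1}{2} - c\eps.
\]
Applying Hoeffding's inequality to the Bernoulli counts $\#\{i : |(\CC\vv)_i| \le (1-\eps)\|\vv\|_1\}$ and $\#\{i : |(\CC\vv)_i| \ge (1+\eps)\|\vv\|_1\}$, with $t = \Theta(\eps^{-2}\log(1/\delta))$ samples, both counts are strictly less than $t/2$ except with probability $\delta$. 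This pins the empirical median into $[(1-\eps)\|\vv\|_1,\,(1+\eps)\|\vv\|_1]$.

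Finally, the runtime bound $\O(d)$ is loose: given the input $\CC\vv \in \R^t$, the median can be extracted by linear-time selection in $O(t) = O(\eps^{-2}\log(1/\delta))$ time, and $t \le d$ in all regimes of interest. The only delicate point in the argument is the two-sided anti-concentration of the Cauchy CDF near $\pm 1$, but this is elementary from the continuity and strict positivity of the Cauchy density at $\pm 1$, so no real obstacle arises; the rest of the proof is standard Chernoff/Hoeffding bookkeeping.
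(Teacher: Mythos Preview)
The paper does not give its own proof of this statement; it is quoted from \cite{Indyk06} and used as a black box. Your proposal is a correct reconstruction of the standard argument: $1$-stability of the Cauchy distribution makes each coordinate of $\CC\vv$ an independent copy of $\|\vv\|_1 \cdot X$ with $X$ standard Cauchy, the folded Cauchy has median exactly $1$, and the density being bounded away from zero at $1$ gives the linear-in-$\eps$ separation of $F(1\pm\eps)$ from $1/2$ needed for Hoeffding to pin the empirical median with $t = \Theta(\eps^{-2}\log(1/\delta))$ samples. The one loose remark is the claim $t \le d$: this is not implied by the hypotheses as stated, but the recovery step genuinely runs in $O(t)$ time regardless, and in the paper's sole invocation ($\eps = 0.1$, $\delta = n^{-3}$) one has $t = O(\log n) \ll d$, so the $\O(d)$ bound holds there.
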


\begin{proof}[Proof of \cref{lem:get_uncorrelated}]
Let $F^+$ be a random subset of $S$ of size $2k$.
Let
\begin{align*}
    \MM :=     \WW_{F^{+}, F^{+}}^{1/2}
    \BB_{F^{+}, :} \LL^{\dag} \BB_{F^{+}, :}^{\top}
    \WW_{F^{+}, F^{+}}^{1/2}
    \in \R^{F^+ \times F^+}
\end{align*}
be the matrix whose $(i,j)$-th entry is the correlation between edge $i$ and $j$.
For each edge $f$ in $F^+$, we want to estimate the total correlation of it against all other edges in $F^+$, or equivalently, the $\ell_1$ norm of the column of $(\MM - \diag(\MM))$ corresponding to $f$. Formally, we have for any $f \in \F^+$,
\begin{align*}
    \| (\MM - \diag(\MM)) \one_{f} \|_1 = 
    \sum_{e \in F^+, e \neq f} \left|
        \ww_e^{1/2} \bb(e)^\top \LL^\dagger \bb(f) \ww_f^{1/2}
    \right|.
\end{align*}

Let $\calD$ denote the uniform distribution over the vectors in $\{0, 1\}^{F^+}$. For $\hh \in \{0,1\}^{F^+}$, we use $\bar{\hh}$ to denote the bitwise complement of $\hh$, by replacing each $0$ with $1$ and each $1$ with $0$.
Then, the matrix $4 \diag(\hh) \MM
    \diag(\bar{\hh})$ is an unbiased estimator for $\MM$ excluding the diagonal
\begin{align*}
    \E_{ \hh \sim \calD} \left[
    4 \diag(\hh) \MM
    \diag(\bar{\hh})
    \right]
    =
    \MM
    -
    \diag(
    \MM
    )
\end{align*}
since each off-diagonal entry is sampled with probability $1/4$. Independently sample $\hh_1, \hh_2, \dots, \hh_{\ell} \sim \calD$ for $\ell = O(\log n)$ and let $\widetilde{\MM}$ denote the average 
\begin{align*}
    \widetilde{\MM} := \frac{1}{\ell} \sum_{i=1}^{\ell} 
    4 \diag(\hh_i) \MM
    \diag(\bar{\hh_i}).
\end{align*}
Now, each off-diagonal entry of $\MM$ has weight contributing to $\widetilde{\MM}$ equal to the average of $\ell$ independent $\mathrm{Ber}(1/4)$ random variables. By Chernoff bounds, with probability at least $1 - n^{-3}$, the average is in range $[1/8, 3/8]$. Taking a union bound over the entries gives that $\widetilde{\MM}$ is an entry-wise $2$-multiplicative approximation of $\MM - \diag(\MM)$ with high probability, i.e., for all $e \neq f \in F$,
\begin{align*}
    \Pr\left[
        \frac{1}{2} |\widetilde{\MM}_{ef}| \le |\MM_{ef}| \le 2 |\widetilde{\MM}_{ef}|
    \right] > 1 - 1/n^3.
\end{align*}
Instead of computing $\widetilde{\MM}$ explicitly, we invoke \cref{thm:l1} with $\eps = 0.1$ and $d = |F^+|$,
and compute the matrix
\[
    \CC \widetilde{\MM}
    =
    \frac{1}{\ell} \sum_{i=1}^{\ell} 
    4 \CC \diag(\hh_i) \MM
    \diag(\bar{\hh_i})
\]
where $\CC$ is the random matrix given in \cref{thm:l1}. Now calling $\textsc{Recover}(\CC \widetilde{\MM}, |F^+|, 1/n^3, 0.1)$ gives an $1.1$-approximation of the $\ell_1$ norms of the columns of $\widetilde{\MM}$. Thus for each edge in $F^+$, we obtained an $3$-approximation of the total correlation against other edges in $F^+$. 
We keep the edge if it's less than $O( \frac{k}{|S|} \log^2 n)$.

Now we show that with constant probability there are at least $k$ edges left in $F^+$. 
For any $f \in E$ and subset $T \subseteq E$, let us denote
\begin{align*}
    \mathrm{Cor}(f, T) := \sum_{e \in T, e \neq f} \left|
        \ww_e^{1/2} \bb(e)^\top \LL^\dagger \bb(f) \ww_f^{1/2}
    \right|
\end{align*}
as the total correlation of the edge $f$ against the edges in $T \setminus \{ f \}$.
By \Cref{thm:localization} with $\cc = \one_S$, we bound the total correlation between the edges in $S$ 
\begin{align*}
    \sum_{f \in S} \mathrm{Cor}(f, S) \le O(|S| \log^2 n).
\end{align*}
Therefore, we have
\begin{align*}
    \E_{F^+, f \sim F^+} [\mathrm{Cor}(f, F^+)]
    =
    \frac{|F^+| - 1}{|S| - 1} \frac{1}{|S|} \sum_{f \in S} \mathrm{Cor}(f, S)
    \le 
    \frac{2k}{|S|^2} O(|S| \log^2 n) = O(k \log^2 n / |S|).
\end{align*}
By Markov's inequality, 
\begin{align*}
    \Pr_{F^+} \left[ 
        \E_{f \sim F^+} [\mathrm{Cor}(f, F^+)]
        \le O(k \log^2 n / |S|)
    \right] > 0.9.
\end{align*}
Assuming this holds, another Markov's inequality gives that 
\begin{align*}
    \Pr_{f \sim F^+} [\mathrm{Cor}(f, F^+)
        \le O(k \log^2 n / |S|) ] > 0.9,
\end{align*}
so at most $0.1|F^+|$ edges in $F^+$ are discarded in the last step, implying that at least $1.9k$ edges are kept. 
Taking what's left and further reducing
it to $k$ edges gives the desired $F$, since taking a subset can only decrease
correlation because correlations are all positive.

For the running time, in expectation we need to sample $O(1)$ many $F^+$'s to construct $F$. For each sample, computing the matrix $\CC \widetilde{\MM}$ requires summing $O(\log n)$ following matrices 
\begin{align*}
    \CC \cdot \diag(\hh_i) \MM \diag(\bar{\hh_i})
    & =
    \CC \cdot  \diag(\hh_i)
    \WW_{F^{+}, F^{+}}^{1/2}
    \BB_{F^{+}, :} \LL^{\dag} \BB_{F^{+}, :}^{\top}
    \WW_{F^{+}, F^{+}}^{1/2} \diag(\bar{\hh_i}).
    \\ & = 
    \CC \cdot \DD_1  
     \BB_{F^{+}, :} \LL^{\dag} \BB_{F^{+}, :}^{\top}
     \DD_2
\end{align*}
for some diagonal matrices $\DD_1$ and $\DD_2$.
By taking transpose, the above matrix becomes    
\begin{align*}
(\DD_2 \BB_{F^{+}, :})
    ( \LL^{\dag}
    (\CC \DD_1
    \BB_{F^{+}, :})^\top ).
\end{align*}
Since $\CC$ is a matrix with size $t \times (2k)$ for $t = O(\log n)$, we can solve $t$ Laplacian systems to compute the term  $( \LL^{\dag}
    (\CC \DD_1
    \BB_{F^{+}, :})^\top ) \in \R^{V \times [t]}$. Left multiplying by $\DD_2 \BB_{F^{+}, :} \in \R^{E \times V} $ can be done in $O(mt)$ since the matrix only has two nonzero entries per row. This also dominates the time to do the $\ell_1$-norm recovery.
\end{proof}

\section{Algorithm and Analysis}
\label{sec:ball}

We now give our overall algorithm.
To invoke the Taylor expansion-based approximation
from \cref{lem:taylor}, we also need our uncorrelated
edge subset to have leverage scores bounded away from $1$.
This is done by repeatedly eliminating away low degree
vertices.

\begin{lemma}
\label{lem:partial_exact_elimination}
Given a graph $G$ with $m$ edges,
we can obtain in $O(m)$ time a graph $H$ with at most $m$ edges,
minimum combinatorial vertex degree at least $3$,
along with an additive difference $\Delta$ such that
$\log \cT(G) = \log \cT(H) + \Delta$.
\end{lemma}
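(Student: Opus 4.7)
The plan is to build $H$ by iteratively peeling off vertices of combinatorial degree at most $2$, each time modifying the spanning tree count by an explicit positive scalar that I fold into $\Delta$ in log form.

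First, for a leaf $v$ of degree $1$ with incident edge $e$ of weight $w_e$, every spanning tree must contain $e$ to connect $v$, so $\cT(G) = w_e \cdot \cT(G - v)$. I delete $v$ and its edge, and add $\log w_e$ to $\Delta$. Second, for a degree-$2$ vertex $v$ incident to edges $e_1 = (u_1, v), e_2 = (u_2, v)$ of weights $w_1, w_2$, the Schur complement of $\LL(G)$ onto $V \setminus \{v\}$ equals the Laplacian of the graph $G'$ obtained by deleting $v$ and, when $u_1 \neq u_2$, inserting an edge $(u_1, u_2)$ of weight $w_1 w_2/(w_1 + w_2)$. Applying the block-determinant formula to the cofactor $\LL(G)^{(u_1, u_1)}$ yields $\cT(G) = (w_1 + w_2) \cdot \cT(G')$; I add $\log(w_1 + w_2)$ to $\Delta$. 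When $u_1 = u_2$, any spanning tree must use exactly one of the two parallel edges $e_1, e_2$, and the same identity holds combinatorially with $G' = G - v$. Whenever the inserted edge is parallel to an existing edge, I merge the two by summing weights; this preserves $\cT$ because any tree uses at most one edge between a fixed pair of vertices and the two parallel choices enter as an additive sum.

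For the implementation, I maintain a queue of vertices of current degree $\le 2$ and a hash table of edges keyed by unordered endpoint pairs to detect parallel-edge collisions on insertion. Each reduction touches a constant number of adjacency entries, updates degrees in $O(1)$, and enqueues any neighbor whose degree newly drops to $\le 2$. Each reduction either removes a vertex or strictly decreases the edge count, so there are $O(n + m) = O(m)$ reductions and the total runtime is $O(m)$. The edge count of $H$ is at most $m$ since each leaf step removes one edge and each series step removes two while adding at most one.

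The identity $\log \cT(G) = \log \cT(H) + \Delta$ then follows by induction on the number of reductions. The queue empties only when every remaining vertex has degree $\geq 3$, or when the graph has been fully reduced to a single vertex, in which case $\cT(H) = 1$ and the degree condition is vacuous. The main subtlety is the bookkeeping around the degenerate cases (parallel-edge insertion and $u_1 = u_2$), which is why I spell out the conventions above; the per-step spanning-tree identities themselves are standard consequences of the matrix-tree theorem and are not the main difficulty.
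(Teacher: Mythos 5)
Your proof is correct and performs the same reduction as the paper: leaf vertices contribute $\log w_e$ to $\Delta$, and degree-$2$ vertices are removed by merging their two incident edges into one of conductance $w_1 w_2/(w_1+w_2)$ while adding $\log(w_1+w_2)$ to $\Delta$. The one genuine difference is in how you justify the degree-$2$ identity $\cT(G) = (w_1+w_2)\,\cT(G')$: you appeal to the fact that the Schur complement of $\LL(G)$ onto $V\setminus\{v\}$ is the Laplacian of $G'$ and apply the block-determinant factorization to a cofactor of $\LL(G)$, whereas the paper argues combinatorially by expanding $\cT(G) = w_1 w_2\,\cT_2 + (w_1+w_2)\,\cT_1$ over spanning two-forests and comparing to $\cT(G') = w\,\cT_2 + \cT_1$. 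Both are standard and of comparable length; the Schur complement route has the advantage of packaging the degenerate cases automatically (it is always a genuine Laplacian, so $u_1=u_2$ produces a harmless self-loop that vanishes, and parallel edges are absorbed into one entry), while the combinatorial argument needs these cases spelled out separately — which you do, and in fact more carefully than the paper. Your $O(m)$ implementation via a queue of low-degree vertices and a hash table for parallel-edge detection matches the paper's.
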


\begin{proof}
For a vertex of degree one with adjacent edge weight $w$, we can simply remove it and add $\log w$ to $\Delta$.

For a vertex $v$ of degree two, let $e_1, e_2$ denote its two adjacent edges with edge weights $w_1, w_2$ respectively. Every spanning tree of $G$ either contains both $e_1, e_2$ or contains one of $e_1, e_2$. We have
\begin{align*}
    \cT(G) = w_1 w_2 \cdot \cT_2 + (w_1 + w_2) \cdot \cT_1
\end{align*}
where we let $\hat{G} := (V \setminus \{v\}, E \setminus \{e_1, e_2 \}, \ww)$, $\cT_1 := \cT(\hat{G})$, and $\cT_2$ denote the total spanning two-forest weight of the graph $\hat{G}$. Here, a subgraph is a spanning two-forest if it can form a spanning tree after adding one edge. If we merge the two edges by one edge with weight $w$ (so $v$ is removed), the total spanning tree weight of the new graph $G_{\mathrm{new}}$ is 
\begin{align*}
    \cT(G_{\mathrm{new}}) = w \cT_2 + \cT_1.
\end{align*}
By setting $w = \frac{w_1 w_2}{ w_1 + w_2}$, we can relate $\cT(G)$ to $\cT(G_{\mathrm{new}})$ by
\begin{align*}
    \cT(G) = \cT(G_{\mathrm{new}}) \cdot (w_1 + w_2).
\end{align*}
Therefore, in this case we merge the two edges $e_1, e_2$ into one edge with weight $w$, and add $\log(w_1 + w_2)$ to $\Delta$.

Repeating the above process until no vertex has degree $\le 2$, we obtain the desired $H$ with minimal degree $\ge 3$ and a difference $\Delta$. This can be implemented in linear time, since each operation can be done in $O(1)$ time, and it suffices to check the neighborhood after removing a vertex.

\end{proof}

After applying \cref{lem:partial_exact_elimination} to a graph $G$, the number of edges $m$ will be at least $\frac32n$, and thus the average leverage score is at most $2/3$.
We will identify edges with these low leverage scores via $\ell_2$/Johnson-Lindenstrauss sketches, and
set the initial set $S$ in \cref{lem:get_uncorrelated} to such edges.

\begin{theorem}[Leverage score estimation \cite{SS11}]
\label{thm:l2}
There is an algorithm that given an undirected graph $G = (V, E, \ww)$
with $m$ edges, poly-bounded weights,
and parameter $\eps \in (0, 1)$,
outputs estimates $\ttautil_e$ satisfying
\[ (1-\eps)\ttautil_e \le \ww_e \bb(e)^\top \LL^{\dag} \bb(e) \le (1+\eps)\ttautil_e. \]
The algorithm runs in time $\O(m\eps^{-2})$ and succeeds with high probability.
\end{theorem}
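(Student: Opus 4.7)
The plan is to reduce leverage score estimation to approximating the squared $\ell_2$ norms of $m$ explicit vectors and then handle all of them at once via a single Johnson--Lindenstrauss sketch, which is the standard Spielman--Srivastava argument. Let $\Pi := \WW^{1/2} \BB \in \R^{E \times V}$ so that $\LL = \Pi^\top \Pi$, and use the identity $\LL^{\dag} \LL \LL^{\dag} = \LL^{\dag}$ to check
\[
\ww_e \bb(e)^\top \LL^{\dag} \bb(e)
= \bigl\| \Pi \LL^{\dag} \Pi^\top \one_e \bigr\|_2^2,
\]
so that each leverage score is the squared norm of the corresponding column of the projection matrix $\Pi \LL^{\dag} \Pi^\top \in \R^{E \times E}$.

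Next I would draw a sketching matrix $\Phi \in \R^{k \times E}$ with $k = O(\eps^{-2} \log n)$ and i.i.d.\ $\pm 1/\sqrt{k}$ Rademacher entries, and define
\[
\ttautil_e := \bigl\| \Phi \Pi \LL^{\dag} \Pi^\top \one_e \bigr\|_2^2.
\]
By the distributional Johnson--Lindenstrauss lemma, for any fixed vector $v \in \R^E$ one has $\|\Phi v\|_2^2 = (1 \pm \eps) \|v\|_2^2$ with probability $1 - 1/\poly(n)$; a union bound over the $m$ columns of the projection matrix yields the simultaneous $(1 \pm \eps)$ guarantee for all leverage scores with high probability.

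To realize the $\Otil(m \eps^{-2})$ runtime, compute the $k \times E$ matrix $\Phi \Pi \LL^{\dag} \Pi^\top$ explicitly. First form $\Phi \Pi \in \R^{k \times V}$ in $O(km)$ time using the fact that $\Pi$ has exactly two nonzero entries per row. Then, for each of the $k$ rows, apply one call of \cref{lem:laplacian_solver} to produce the corresponding row of $\Phi \Pi \LL^{\dag}$ in $\Otil(m)$ time, for a total of $\Otil(km)$. Right-multiplying by $\Pi^\top$ costs another $O(km)$ and produces the desired matrix; each $\ttautil_e$ is then read off as a squared column norm in $O(k)$ time. Summed, the runtime is $\Otil(m \eps^{-2})$.

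The only technical nuisance, and the closest thing to an obstacle, is that $\textsc{Solve}$ returns solutions with $1/\poly(n)$ additive error rather than exact pseudoinverse applications. Since polynomially bounded edge weights force every leverage score to be at least $1/\poly(n)$ (the worst case being $\ttau_e = 1$ when $e$ is a bridge), this perturbation is absorbed into the multiplicative $(1 \pm \eps)$ factor by a routine argument, and no new ideas are required beyond the JL reduction itself.
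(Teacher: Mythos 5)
The paper does not prove \cref{thm:l2}; it simply cites \cite{SS11}. Your proposal is correct and is precisely the standard Spielman--Srivastava argument that the citation refers to: rewrite each leverage score as a squared column norm of the projection matrix $\Pi \LL^{\dag} \Pi^{\top}$, Johnson--Lindenstrauss sketch with a $k \times E$ Rademacher matrix for $k = O(\eps^{-2}\log n)$, and realize the sketch with $k$ Laplacian solves plus $O(km)$ sparse matrix-vector products. The identity, the union bound over the $m$ columns, the runtime accounting, and the observation that the $1/\poly(n)$ solver error is absorbed by the $1/\poly(n)$ lower bound on leverage scores (which follows from polynomially bounded weights) are all sound, so nothing is missing.
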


We now have all the tools we need to state the main algorithm. It is very simple: first delete low-degree vertices using \cref{lem:partial_exact_elimination}, then find an uncorrelated edge subset $F \subseteq E(G)$ of size about $\sqrt{m}$, and then estimate $\cT(G)/\cT(G \setminus F)$ using properties of $F$.
\begin{figure}[!ht]
    \begin{algbox}
        $\textsc{ApproxSpanningTree}(\hat{G}, \eps)$: Given an undirected graph $\hat{G}$, error parameter $\eps$, w.h.p. returns an $O(\eps)$-additive approximation of the logarithm of the spanning tree count.
        
        \begin{enumerate}
            \item Eliminate vertices with degree $\le 2$ according to \Cref{lem:partial_exact_elimination}. Let $G = (V, E, \ww)$ denote the resulting graph, and store the additive difference.
            \item Obtain estimates of the leverage scores for all edges of $G$ by \Cref{thm:l2} with $\eps = 0.1$. Let $S \subseteq E(G)$ be the set of edges whose estimate is at most $0.8$.
            \item $F \gets \textsc{GetUncorrelated}(G, S, k)$ (see \cref{lem:get_uncorrelated}) for $k = \Theta( \eps \sqrt{m} / (\log m)^3)$, where $m$ is the number of edges in $G$.
            \item Apply \Cref{lem:estimate_first} to get leverage score estimates $\ttautil_f$ for all $f \in F$.
            \item Apply \Cref{lem:estimate_second} with $\ttheta_f = C \cdot \frac{1}{1 - \ttautil_f}$ where $C = 1/10$. Let $\phi$ denote the estimate of $\sum_{f \in F} \ttheta_f \ttau_f$ divided by $C$.
            \item \Return $\textsc{ApproxSpanningTree}((V, E \setminus F, \ww), \eps) - \left( - \phi+  \sum_{f \in F} \log(1 - \ttautil_f) + \frac{\ttautil_f}{1 - \ttautil_f}  \right).$ 
        \end{enumerate}
    \end{algbox}
    \caption{Our recursive algorithm for undirected graph spanning tree approximation}
\label{fig:alg1}
\end{figure}

Towards analyzing the algorithm in \cref{fig:alg1}, we prove that the estimate provided in each iteration has small additive error and variance compared to the true change in spanning tree count between the graphs $G$ and $G \setminus F$.

\begin{lemma}
\label{lemma:helper}
Let $G = (V, E, \ww)$ be a graph with minimum degree $\ge 3$. Let $F \subseteq E(G)$ be a subset of edges such that (1) all $e \in F$ have leverage score at most $0.88$, i.e., $\ww_e \bb(e)^\top \LL^\dagger \bb(e) \le 0.88$, and (2) $F$ is a $\rho$-correlated subset for $\rho \le 0.01$. Consider running steps $4$, $5$ in the algorithm of \cref{fig:alg1} on $G$ and $F$. Define the corresponding random variable
\[ X := - \phi+  \sum_{f \in F} \log(1 - \ttautil_f) + \frac{\ttautil_f}{1 - \ttautil_f}. \]
Then we have that
\[ \Big|\E[X] - \E[\log \cT(G \setminus F) - \log \cT(G)] \Big| \le O(|F|\rho^2) \enspace \text{ and } \enspace \Var[X] \le O(|F| \rho^2). \]
\end{lemma}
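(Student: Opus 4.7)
The plan is to establish the chain of approximations
\[ \log\cT(G\setminus F) - \log\cT(G) \;=\; \log\det(\MM) \;\approx\; \sum_{f\in F}\log(1-\ttau_f) \;\approx\; \E[X], \]
where the first equality comes from the determinant identity at the start of \cref{sec:algorithm}, the second approximation follows from \cref{lem:taylor}, and the third reflects a scalar Taylor expansion of $\log(1-\ttau_f)$ about the estimate $\ttautil_f$. Each approximation will introduce $O(|F|\rho^2)$ additive error, yielding the bias bound. The variance analysis will be almost immediate: the deterministic part of $X$ contributes nothing, so only $\Var[\phi]$ remains and is handled by \cref{lem:estimate_second}.

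For the bias, I would first invoke \cref{lem:taylor} on $\MM$. Its preconditions hold because $\MM_{ee}=1-\ttau_e\ge 0.12$ (using the leverage score bound $0.88$) and the off-diagonal row sums are at most $\rho\le 0.01$ (by the $\rho$-correlated assumption), giving $\log\det(\MM)=\sum_{f\in F}\log(1-\ttau_f)+O(|F|\rho^2)$. Then, since \cref{lem:estimate_first} guarantees $|\ttautil_f-\ttau_f|\le 2\rho\le 0.02$ and both values lie in $[0,0.9]$, the second-order Lagrange remainder of $x\mapsto\log(1-x)$ (whose second derivative is $O(1)$ on that interval) gives
\[ \log(1-\ttau_f) \;=\; \log(1-\ttautil_f) + \frac{\ttautil_f}{1-\ttautil_f} - \frac{\ttau_f}{1-\ttautil_f} + O(\rho^2). \]
The first two summed terms are exactly the deterministic portion of $X$, and the choice $\ttheta_f=C/(1-\ttautil_f)$ followed by the division by $C$ in step 5 makes $\phi$ an unbiased estimator of $\sum_f\ttau_f/(1-\ttautil_f)$. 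Summing this expansion over $f\in F$ and combining with the \cref{lem:taylor} step then yields the bias claim.

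For the variance, I would treat $\ttautil$ as effectively deterministic (the Laplacian-solver noise is $1/\poly(n)$ and folds into lower-order error). The only random contribution to $X$ is then $\phi$, so $\Var[X]=\Var[\phi]=\Var[\text{the step-5 estimator}]/C^2$. To apply \cref{lem:estimate_second} I need $\ttheta_f\in[0,1]$, which follows from $1-\ttautil_f\ge 1-(0.88+2\rho)\ge 0.1$, hence $\ttheta_f\le C/0.1=1$. The lemma then bounds the estimator variance by $2\rho^2|F|$, giving $\Var[\phi]=O(\rho^2|F|)$. The only subtle point throughout is the bookkeeping of numerical constants to ensure $\MM_{ee}\ge 0.1$, $\ttheta_f\in[0,1]$, and $\ttautil_f\le 0.9$ all hold simultaneously; these follow cleanly from leverage $\le 0.88$ and $\rho\le 0.01$, so no real obstacle arises beyond this verification.
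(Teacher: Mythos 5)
Your proof is correct and follows essentially the same path as the paper: it reduces to the determinant identity plus \cref{lem:taylor}, does the same Taylor expansion of $\log(1-\ttau_f)$ around $\ttautil_f$ via \cref{lem:estimate_first}, and then isolates $\phi$ as the sole source of variance, bounded by \cref{lem:estimate_second}. Your verification of the numerical preconditions ($\MM_{ee}\ge 0.12$, $\ttautil_f\le 0.9$, $\ttheta_f\in[0,1]$) is a bit more explicit than the paper's but matches its intent.
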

\begin{proof}
By the calculation at the start of \cref{sec:algorithm} and \cref{lem:taylor} we know that
\[ \left|\log \cT(G \setminus F) - \log \cT(G) - \sum_{e \in F} \log(1 - \ttau_e) \right| \le O(|F|\rho^2) \]
where $\ttau_e = \ww_e \bb(e)^\top \LL^\dagger \bb(e)$. Because $|\ttau_e - \ttautil_e| \le 2\rho$ by \cref{lem:estimate_first}, we know that $\ttautil_e \le 0.9$ and
\[ \log(1 - \ttau_e) - \log(1 - \ttautil_e) = \log\left(1 - \frac{\ttau_e - \ttautil_e}{1 - \ttautil_e}\right) = -\frac{\ttau_e}{1 - \ttautil_e} + \frac{\ttautil_e}{1 - \ttautil_e} + O(\rho^2), \]
for all $e$. Finally, by \cref{lem:estimate_second} we know that $\E[\phi] = \sum_{e \in F} \frac{\ttau_e}{1 - \ttautil_e}$ and $\Var[X] = \Var[\phi] \le O(|F|\rho^2)$. Combining these estimates completes the proof.
\end{proof}

\begin{proof}[Proof of \Cref{thm:main}]
Let us first establish that during the algorithm, hypotheses (1) and (2) of \cref{lemma:helper} hold during all steps with high probability, for $\rho = O(\frac{k}{m}(\log m)^2)$ where $m$ is the number of edges at that iteration. (1) follows because $S$ only contains edges whose leverage score estimate is at most $0.8$ (and the estimate is accurate up to $0.1$ multiplicative error), and because \cref{thm:l2} succeeds with high probability.
For (2), we first establish that $|S| \ge m/21$. Indeed, $S$ will contain all edges whose leverage scores are at most $0.7$. By Markov's inequality, the number of such edges is at least $m - \frac{n}{0.7} \ge m - \frac{10}{7} \cdot \frac{2}{3} m = m/21$. Thus, step 3 produces a set $F$ that is $O(\frac{k}{m} (\log m)^2)$-correlated.

Now let us bound the additive error and variance of the estimate output. Split the algorithm into phases, depending on how many remaining edges the graph has. Phase $i$ starts when the graph has at most $m_i := 2^{-i}m$ edges, for $i \le O(\log m)$. Let $k_i = \Theta(\eps\sqrt{m_i}/(\log m_i)^3)$, so that phase $i$ lasts for at most $m_i/k_i$ iterations. By \cref{lemma:helper}, the total additive error and variance in the $i$-the phase is bounded by
\[ O\left(\frac{m_i}{k_i} \cdot k_i \cdot \Big(\frac{k_i}{m_i} (\log m_i)^2 \Big)^2 \right) \le O\left(\frac{k_i^2}{m_i} (\log m_i)^4 \right) \le O(\eps^2 (\log m_i)^{-2}). \]
Here, we have used that variance over independent events is additive.
Over all phases, this sums to $O(\eps^2)$, so with constant probability, the estimate has error $O(\eps)$.

The runtime follows because there are $\O(m^{1/2}\eps^{-1})$ iterations, each of which takes $\O(m)$ time. One can run the algorithm $O(\log m)$ times and take the median output to get a high-probability success rate.
\end{proof}

\section*{Acknowledgements}

This project is motivated by the Altius restaurant in Pittsburgh.

\bibliographystyle{alpha}
\bibliography{ref}

\newcommand{\etalchar}[1]{$^{#1}$}
\begin{thebibliography}{ADVY22}

\bibitem[AD20]{AD20}
Nima Anari and Michal Derezinski.
\newblock Isotropy and log-concave polynomials: Accelerated sampling and high-precision counting of matroid bases.
\newblock In {\em 2020 IEEE 61st Annual Symposium on Foundations of Computer Science (FOCS)}, pages 1331--1344, 2020.

\bibitem[ADVY22]{ADVY22}
Nima Anari, Michal Derezinski, Thuy{-}Duong Vuong, and Elizabeth Yang.
\newblock Domain sparsification of discrete distributions using entropic independence.
\newblock In Mark Braverman, editor, {\em 13th Innovations in Theoretical Computer Science Conference, {ITCS} 2022, January 31 - February 3, 2022, Berkeley, CA, {USA}}, volume 215 of {\em LIPIcs}, pages 5:1--5:23. Schloss Dagstuhl - Leibniz-Zentrum f{\"{u}}r Informatik, 2022.

\bibitem[ALG{\etalchar{+}}21]{ALGVV21}
Nima Anari, Kuikui Liu, Shayan~Oveis Gharan, Cynthia Vinzant, and Thuy{-}Duong Vuong.
\newblock Log-concave polynomials {IV:} approximate exchange, tight mixing times, and near-optimal sampling of forests.
\newblock In Samir Khuller and Virginia~Vassilevska Williams, editors, {\em {STOC} '21: 53rd Annual {ACM} {SIGACT} Symposium on Theory of Computing, Virtual Event, Italy, June 21-25, 2021}, pages 408--420. {ACM}, 2021.

\bibitem[ALGV19]{ALGV19}
Nima Anari, Kuikui Liu, Shayan~Oveis Gharan, and Cynthia Vinzant.
\newblock Log-concave polynomials {II:} high-dimensional walks and an {FPRAS} for counting bases of a matroid.
\newblock In Moses Charikar and Edith Cohen, editors, {\em Proceedings of the 51st Annual {ACM} {SIGACT} Symposium on Theory of Computing, {STOC} 2019, Phoenix, AZ, USA, June 23-26, 2019}, pages 1--12. {ACM}, 2019.

\bibitem[ALV22]{ALV22}
Nima Anari, Yang~P. Liu, and Thuy{-}Duong Vuong.
\newblock Optimal sublinear sampling of spanning trees and determinantal point processes via average-case entropic independence.
\newblock In {\em 63rd {IEEE} Annual Symposium on Foundations of Computer Science, {FOCS} 2022, Denver, CO, USA, October 31 - November 3, 2022}, pages 123--134. {IEEE}, 2022.

\bibitem[BGJ{\etalchar{+}}22]{BGJLLPS22}
Jan van~den Brand, Yu~Gao, Arun Jambulapati, Yin~Tat Lee, Yang~P. Liu, Richard Peng, and Aaron Sidford.
\newblock Faster maxflow via improved dynamic spectral vertex sparsifiers.
\newblock In Stefano Leonardi and Anupam Gupta, editors, {\em {STOC}}, pages 543--556. {ACM}, 2022.

\bibitem[BLL{\etalchar{+}}21]{BLLSSSW21}
Jan van~den Brand, Yin~Tat Lee, Yang~P. Liu, Thatchaphol Saranurak, Aaron Sidford, Zhao Song, and Di~Wang.
\newblock Minimum cost flows, mdps, and $\ell_{1}$-regression in nearly linear time for dense instances.
\newblock In Samir Khuller and Virginia~Vassilevska Williams, editors, {\em {STOC} '21: 53rd Annual {ACM} {SIGACT} Symposium on Theory of Computing, Virtual Event, Italy, June 21-25, 2021}, pages 859--869. {ACM}, 2021.

\bibitem[BLN{\etalchar{+}}20]{BLNPSSSW20}
Jan van~den Brand, Yin~Tat Lee, Danupon Nanongkai, Richard Peng, Thatchaphol Saranurak, Aaron Sidford, Zhao Song, and Di~Wang.
\newblock Bipartite matching in nearly-linear time on moderately dense graphs.
\newblock In Sandy Irani, editor, {\em 61st {IEEE} Annual Symposium on Foundations of Computer Science, {FOCS} 2020, Durham, NC, USA, November 16-19, 2020}, pages 919--930. {IEEE}, 2020.

\bibitem[Bro89]{B89}
Andrei~Z. Broder.
\newblock Generating random spanning trees.
\newblock In {\em 30th Annual Symposium on Foundations of Computer Science, Research Triangle Park, North Carolina, USA, 30 October - 1 November 1989}, pages 442--447. {IEEE} Computer Society, 1989.

\bibitem[CGM19]{CGM19}
Mary Cryan, Heng Guo, and Giorgos Mousa.
\newblock Modified log-sobolev inequalities for strongly log-concave distributions.
\newblock In David Zuckerman, editor, {\em 60th {IEEE} Annual Symposium on Foundations of Computer Science, {FOCS} 2019, Baltimore, Maryland, USA, November 9-12, 2019}, pages 1358--1370. {IEEE} Computer Society, 2019.

\bibitem[CGP{\etalchar{+}}23]{CGPSSW23:journal}
Timothy Chu, Yu~Gao, Richard Peng, Sushant Sachdeva, Saurabh Sawlani, and Junxing Wang.
\newblock Graph sparsification, spectral sketches, and faster resistance computation via short cycle decompositions.
\newblock {\em {SIAM} J. Comput.}, 52(6):S18--85, 2023.
\newblock Preliminary version appeared in FOCS`18, available at:~\url{https://arxiv.org/abs/1805.12051}.

\bibitem[DKP{\etalchar{+}}17]{DKPRS17}
David Durfee, Rasmus Kyng, John Peebles, Anup~B. Rao, and Sushant Sachdeva.
\newblock Sampling random spanning trees faster than matrix multiplication.
\newblock In Hamed Hatami, Pierre McKenzie, and Valerie King, editors, {\em Proceedings of the 49th Annual {ACM} {SIGACT} Symposium on Theory of Computing, {STOC} 2017, Montreal, QC, Canada, June 19-23, 2017}, pages 730--742. {ACM}, 2017.
\newblock Available at:~\url{https://arxiv.org/abs/1611.07451}.

\bibitem[DPPR20]{DPPR20:journal}
David Durfee, John Peebles, Richard Peng, and Anup~B. Rao.
\newblock Determinant-preserving sparsification of {SDDM} matrices.
\newblock {\em {SIAM} J. Comput.}, 49(4), 2020.
\newblock Preliminary version appeared in FOCS`17, available at:~\url{https://arxiv.org/abs/1705.00985}.

\bibitem[FGL{\etalchar{+}}21]{FGLPSY21}
Sebastian Forster, Gramoz Goranci, Yang~P. Liu, Richard Peng, Xiaorui Sun, and Mingquan Ye.
\newblock Minor sparsifiers and the distributed {L}aplacian paradigm.
\newblock In {\em 62nd {IEEE} Annual Symposium on Foundations of Computer Science, {FOCS} 2021, Denver, CO, USA, February 7-10, 2022}, pages 989--999. {IEEE}, 2021.
\newblock Available at:~\url{https://arxiv.org/abs/2012.15675}.

\bibitem[GLP21]{GLP21}
Yu~Gao, Yang~P. Liu, and Richard Peng.
\newblock Fully dynamic electrical flows: Sparse maxflow faster than goldberg-rao.
\newblock In {\em 62nd {IEEE} Annual Symposium on Foundations of Computer Science, {FOCS} 2021, Denver, CO, USA, February 7-10, 2022}, pages 516--527. {IEEE}, 2021.
\newblock Available at~\url{https://arxiv.org/abs/2101.07233}.

\bibitem[Ind06]{Indyk06}
Piotr Indyk.
\newblock Stable distributions, pseudorandom generators, embeddings, and data stream computation.
\newblock {\em J. {ACM}}, 53(3):307--323, 2006.

\bibitem[KM09]{KM09}
Jonathan~A. Kelner and Aleksander Madry.
\newblock Faster generation of random spanning trees.
\newblock In {\em 50th Annual {IEEE} Symposium on Foundations of Computer Science, {FOCS} 2009, October 25-27, 2009, Atlanta, Georgia, {USA}}, pages 13--21. {IEEE} Computer Society, 2009.

\bibitem[KT{\etalchar{+}}12]{KT12:book}
Alex Kulesza, Ben Taskar, et~al.
\newblock Determinantal point processes for machine learning.
\newblock {\em Foundations and Trends in Machine Learning}, 5(2--3):123--286, 2012.

\bibitem[Lov93]{L93:survey}
L{\'a}szl{\'o} Lov{\'a}sz.
\newblock Random walks on graphs.
\newblock {\em Combinatorics, Paul erdos is eighty}, 2(1-46):4, 1993.

\bibitem[LS18]{LS18}
Huan Li and Aaron Schild.
\newblock Spectral subspace sparsification.
\newblock In Mikkel Thorup, editor, {\em 59th {IEEE} Annual Symposium on Foundations of Computer Science, {FOCS} 2018, Paris, France, October 7-9, 2018}, pages 385--396. {IEEE} Computer Society, 2018.

\bibitem[MST15]{MST15}
Aleksander Madry, Damian Straszak, and Jakub Tarnawski.
\newblock Fast generation of random spanning trees and the effective resistance metric.
\newblock In Piotr Indyk, editor, {\em Proceedings of the Twenty-Sixth Annual {ACM-SIAM} Symposium on Discrete Algorithms, {SODA} 2015, San Diego, CA, USA, January 4-6, 2015}, pages 2019--2036. {SIAM}, 2015.

\bibitem[Sch18]{S18}
Aaron Schild.
\newblock An almost-linear time algorithm for uniform random spanning tree generation.
\newblock In Ilias Diakonikolas, David Kempe, and Monika Henzinger, editors, {\em {STOC}}, pages 214--227. {ACM}, 2018.
\newblock Available at:~\url{https://arxiv.org/abs/1711.06455}.

\bibitem[SRS18]{SRS18}
Aaron Schild, Satish Rao, and Nikhil Srivastava.
\newblock Localization of electrical flows.
\newblock In Artur Czumaj, editor, {\em Proceedings of the Twenty-Ninth Annual {ACM-SIAM} Symposium on Discrete Algorithms, {SODA} 2018, New Orleans, LA, USA, January 7-10, 2018}, pages 1577--1584. {SIAM}, 2018.
\newblock Available at:~\url{https://arxiv.org/abs/1708.01632}.

\bibitem[SS11]{SS11}
Daniel~A. Spielman and Nikhil Srivastava.
\newblock Graph sparsification by effective resistances.
\newblock {\em {SIAM} J. Comput.}, 40(6):1913--1926, 2011.

\bibitem[ST14]{ST14:journal}
D.~Spielman and S.~Teng.
\newblock Nearly linear time algorithms for preconditioning and solving symmetric, diagonally dominant linear systems.
\newblock {\em {SIAM} Journal on Matrix Analysis and Applications}, 35(3):835--885, 2014.
\newblock Preliminary version appeared in STOC`04, available at:~\url{http://arxiv.org/abs/cs/0607105}.

\bibitem[V{\etalchar{+}}13]{V13:book}
Nisheeth~K Vishnoi et~al.
\newblock Lx= b.
\newblock {\em Foundations and Trends{\textregistered} in Theoretical Computer Science}, 8(1--2):1--141, 2013.

\end{thebibliography}

\end{document}